\theoremstyle{plain} \newtheorem{theorem}{Theorem}[section]
  	\newtheorem*{theorem*}{Theorem} \newaliascnt{coro}{theorem}
  	  \newtheorem{corollary}[coro]{Corollary}
  	\newaliascnt{lem}{theorem}
  		\newtheorem{lemma}[lem]{Lemma}
  	\newaliascnt{clm}{theorem}
  		\newtheorem{claim}[clm]{Claim}
	\newaliascnt{fact}{theorem}
 	 	\newtheorem{fact}[theorem]{Fact}
  \newaliascnt{prop}{theorem}
  		\newtheorem{proposition}[prop]{Proposition}
	\newaliascnt{conj}{theorem}
  	\newtheorem*{question*}{Question}
  \theoremstyle{remark} \newtheorem{remark}[theorem]{Remark}
  \theoremstyle{definition} \newaliascnt{defn}{theorem}
 		 \newtheorem{definition}[defn]{Definition}
 	 \theoremstyle{plain}
\crefname{claim}{Claim}{Claims}
\providecommand{\email}[1]{\href{mailto:#1}{\nolinkurl{#1}\xspace}}
\newcommand{\Algo}{\ensuremath{\mathfrak{A}}\xspace}
\newcommand{\eqdef}{\coloneqq}
\newcommand{\accept}{\textsf{accept}\xspace}
\newcommand{\reject}{\textsf{reject}\xspace}
\newcommand{\distribs}[1]{\Delta_{#1}}
\newcommand{\bigO}[1]{{O\mleft( #1 \mright)}}
\newcommand{\bigTheta}[1]{{\Theta\mleft( #1 \mright)}}
\newcommand{\bigOmega}[1]{{\Omega\mleft( #1 \mright)}}
\providecommand{\poly}{\operatorname*{poly}}
\newcommand{\setOfSuchThat}[2]{ \left\{\; #1 \;\colon\; #2\; \right\} }
\newcommand{\indicSet}[1]{\mathds{1}_{#1}}
\newcommand{\dtv}{\operatorname{d}_{\rm TV}}
\newcommand{\totalvardistrestr}[3][]{{\dtv^{#1}\!\left({#2, #3}\right)}}
\newcommand{\totalvardist}[2]{\totalvardistrestr[]{#1}{#2}}
\newcommand{\dist}[2]{\operatorname{dist}\mleft({#1, #2}\mright)}
\newcommand\restr[2]{{\left.\kern-\nulldelimiterspace #1 \vphantom{\big|} \right|_{#2} }}
\newcommand{\norm}[1]{\lVert#1{\rVert}}
\newcommand{\abs}[1]{\left\lvert #1 \right\rvert}
\newcommand{\dabs}[1]{\lvert #1 \rvert}
\newcommand{\lp}[1][1]{\ell_{#1}}
\newcommand{\p}{\mathbf{p}}
\newcommand{\q}{\mathbf{q}}
\newcommand{\cP}{\mathcal{P}}
\long\def\@ReturnFiFi#1#2\fi\fi{\fi\fi#1}\def\scan@author#1#2 \and#3\@nil{\ifx\\#3\\\ifcase#1 \toks@={#2}\else
      \ifnum#1>1 \toks@=\expandafter{\the\expandafter\toks@\expandafter,\expandafter\space
          \the\toks@@
        }\fi
      \toks@=\expandafter{\the\toks@\space and #2}\fi
    \else
      \ifcase#1 \toks@={#2}\@ReturnFiFi{\scan@author1#3\@nil
        }\else
        \ifnum#1>1 \toks@=\expandafter{\the\expandafter\toks@\expandafter,\expandafter\space
            \the\toks@@
          }\fi
      \toks@@={#2}\@ReturnFiFi{\scan@author2#3\@nil
      }\fi
  \fi
  }\expandafter\expandafter\expandafter\scan@author
  \edef\x{\endgroup
  \noexpand\hypersetup{pdfauthor={\the\toks@}}}\x
\newcommand{\ns}{s}
\newcommand{\ab}{n}
\newcommand{\numparts}{k}
\newcommand{\dst}{\varepsilon}
\newcommand{\bF}{\boldsymbol{F}}
\newcommand{\blksz}{b}
\newcommand{\numblk}{k'}
\newcommand{\psmall}{\p'}
\newcommand{\qsmall}{\q'}
\newcommand{\pbig}{\p}
\newcommand{\qbig}{\q}
\newcommand{\akdist}[1]{\mathcal{A}_{#1}}
\DeclareMathOperator{\binidentityproblem}{\mathbf{Identity-Up-To-Binning}}
\newcommand{\myparagraph}[1]{\medskip\noindent\textbf{#1}}
\title{Testing Data Binnings}
\date{}
\author{Cl\'ement L. Canonne\thanks{IBM Research, Almaden. Email: \texttt{ccanonne@cs.columbia.edu}} \and Karl Wimmer\thanks{Duquesne University.  Email: \texttt{wimmerk@duq.edu}}}
\begin{document}
\maketitle

\begin{abstract}
Motivated by the question of data quantization and ``binning,'' we revisit the problem of identity testing of discrete probability distributions. Identity testing (a.k.a. one-sample testing), a fundamental and by now well-understood problem in distribution testing, asks, given a reference distribution (model) $\q$ and samples from an unknown distribution $\p$, both over $[\ab]=\{1,2,\dots,\ab\}$, whether $\p$ equals $\q$, or is significantly different from it.

In this paper, we introduce the related question of \emph{identity up to binning}, where the reference distribution $\q$ is over $\numparts \ll \ab$ elements: the question is then whether there exists a suitable binning of the domain $[\ab]$ into $\numparts$ intervals such that, once ``binned,'' $\p$ is equal to $\q$. We provide nearly tight upper and lower bounds on the sample complexity of this new question, showing both a quantitative and qualitative difference with the vanilla identity testing one, and answering an open question of Canonne~\cite{OpenProblem96}. Finally, we discuss several extensions and related research directions.
 \end{abstract}

\section{Introduction}

Distribution testing~\cite{BatuFRSW00}, an area at the intersection of computer science, data science, and statistics which emerged as an offspring of the field of property testing~\cite{RubinfeldS96,GoldreichGR98}, concerns itself with the following type of questions: ``upon observing independent data points originating from some unknown process or probability distribution $\p$, can we quickly and efficiently decide whether $\p$ satisfies some desirable property, or significantly violates this property?'' One of the prototypical instances of this is the question of \emph{identity testing} (also commonly known as one-sample testing, or goodness-of-fit), where one is given a reference distribution $\q$ and aims to test whether the unknown $\p$ is equal to this purported model $\q$, or far from it in statistical distance.

The sample and time complexity of identity testing have been thoroughly studied, and this question is now well-understood with regard to all parameters at play (see~\cite{GoldreichR00,BatuFFKRW01,Paninski08,DiakonikolasKN15a,DiakonikolasK16,Goldreich16,DiakonikolasGPP18,ValiantV17,BlaisCG19}, or the surveys~\cite{Canonne15,BalakrishnanW18}). However, at the very heart of the question's formulation lies a significant assumption: namely, that the \emph{domain} of the distributions, generally taken to be the set $[\ab]\eqdef \{1,2,\dots,\ab\}$, is the ``right'' representation of the data. In many situations, this is not the case: the observations are made with a given (often arbitrary) level of granularity, e.g., imposed by the accuracy of the measuring equipment; this may lead to falsely (over)accurate measurements, with non-significant precision in the observations. In such cases, the domain $[\ab]$ of both the model $\q$ and the distribution of the measurements $\p$ are somewhat of a red herring, and relying on it to perform identity testing may lead to a wrong answer, by introducing discrepancies where there is none. Instead, a more robust approach is to decide if $\p$ conforms to $\q$ on some suitable quantization of the data: which leads to the question, first suggested as an open question in~\cite{OpenProblem96} and introduced in this work, of \emph{testing identity up to binning}. 

In more detail, the question we consider is the following:
\begin{framed}
\noindent $\binidentityproblem(\ab,\q,\numparts,\dst)$: Given a target number of bins $\numparts$ and a reference distribution $\q$ over $[\numparts]$, a distance parameter $\dst$, and i.i.d.\ samples from an unknown distribution $\p$ over $[\ab]$, is there a partition\footnotemark{} of the domain in $\numparts$ intervals $I_1,\dots,I_\numparts$ such that $\p(I_j)=\q(j)$ for all $1\leq j\leq \numparts$, or is $\p$ at total variation distance at least $\dst$ from all distributions for which such a binning exists?
\end{framed}
\footnotetext{Throughout this paper, by \emph{partitions} we refer to ordered partitions with possibly empty subsets: i.e., a partition of $[\ab]$ in $\numparts$ sets is a sequence of pairwise disjoint sets $(I_1,\dots,I_\numparts)$ such that $\cup_{j=1}^\numparts I_j = [\ab]$, with $\abs{I_j} \geq 0$ for all $j$.}

Before proceeding further, let us discuss some aspects of this formulation (which is illustrated in~\cref{fig:example}). First of all, this formulation is intrinsically tied to discrete distributions (which aligns with the motivations laid out earlier, relating to discretization of observations): indeed, if $\p$ is continuous, then for all choices of $\q$ there always exists a suitable partition of the domain; so that $\p$ trivially satisfies the property. 
This formulation also allows us to test for some underlying structure, without caring about other irrelevant details of the distribution. For instance, setting $\q=(1/3,1/3,1/3)$ captures all distributions $\p$ with three disjoint groups of elements, separated by an arbitrary number of zero-probability elements, each with total probability mass $1/3$.

In terms of the range of parameters, we encourage the reader to keep in mind the setting where $\numparts \ll \ab$ (or even where $\numparts$ is a relatively small constant), which captures fitting an over-accurate set of measurements to a simple model. Yet, we emphasize that even the case $\numparts=\ab$ is of interest and does not collapse to identity testing. Indeed, due to our allowing empty intervals, the problem is both qualitatively and quantitatively different from identity testing, and can be interpreted as identity testing up to merging some clusters of adjacent domain elements and having zero-probability elements in the reference distributions for data points considered irrelevant (see~\cref{fig:nequalsk} for a simple illustration).\smallskip

\begin{figure}[ht!]
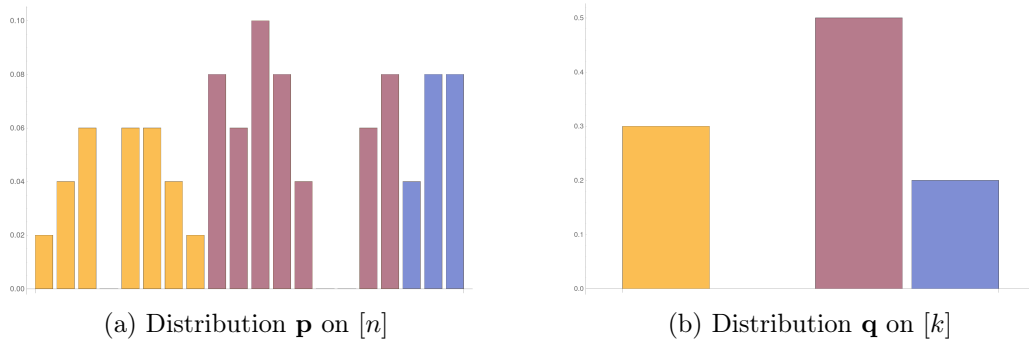

	\centering
	\begin{subfigure}[t]{0.40\textwidth}
		\centering
		\includegraphics[width=0.95\textwidth]{distr-p}
		\caption{Distribution $\p$ on $[\ab]$}\label{fig:1a}		
	\end{subfigure}
	\qquad
	\begin{subfigure}[t]{0.40\textwidth}
		\centering
		\includegraphics[width=0.95\textwidth]{distr-q}
		\caption{Distribution $\q$ on $[\numparts]$}\label{fig:1b}
	\end{subfigure}
\caption{An example, for $\ab=20$ and $\numparts=4$. Here, $\q=(3/10, 0, 1/2, 1/5)$, and $\p=\frac{1}{50}(1, 2, 3, 0, 3, 3, 2, 1, 4, 3, 5, 4, 2, 0, 0, 3, 4, 2, 4, 4)$. A possible partitioning is $I_1=\{1,2,\dots,8\}$, $I_2=\emptyset$, $I_3=\{9,\dots,17\}$, and $I_4=\{18,19,20\}$.}\label{fig:example}
\end{figure}

\begin{figure}[ht]
	\centering
  \includegraphics[width=0.5\textwidth]{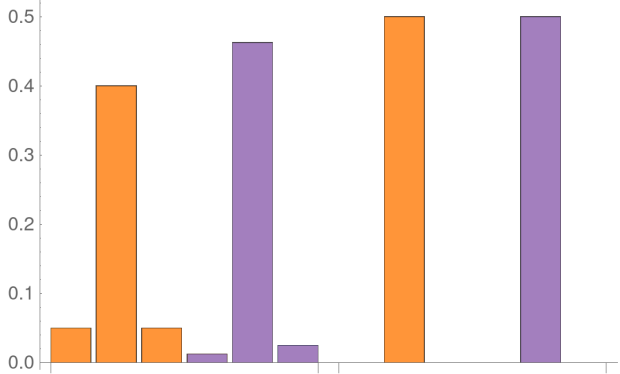}
  \caption{An example for $\ab=\numparts=6$. Here, $\q=(0,1/2,0,0,1/2,0)$ (on the right), and $\p=(1/20, 2/5, 1/20, 1/80, 37/80, 1/40)$ (on the left). The reference distribution $\q$ focuses on the two modes of the distribution (writing off the other elements as ``noise'' or spurious), while the sampled distribution $\p$ also features those elements.}\label{fig:nequalsk}
\end{figure}

We further note that one could consider variants of this problem, each with a slightly different focus. The first variant, \emph{simultaneous binning}, would feature a reference distribution $\q$ over $[\ab]$ (instead of a set of bin probabilities $\q$ on $[\numparts]$), and ask about the existence of an interval partition $I_1,\dots,I_\numparts$ such that $\p(I_j)=\q(I_j)$ for all $j$. One interesting aspect of this variant is that it generalizes non-trivially to continuous distributions (or mixtures of continuous and discrete). However, even insisting for non-empty intervals $I_j$'s, this formulation is only interesting for distributions putting significant probability mass on the first $\numparts-1$ elements.\footnote{Indeed, unless both the reference $\q$ and the unknown $\p$ put $\Omega(\dst)$ mass on $\{1,2,\dots,\numparts-1\}$, a trivial partitioning with $I_j=\{j\}$ for $1\leq j\leq \numparts-1$ and $I_\numparts=\{\numparts,\dots,\ab\}$ provides an immediate answer to the problem.} For $\numparts=o(\ab)$, and in particular constant $\ab$, this is a significant restriction. 
A second variant, which suffers the same drawback but allows more flexibility, also provides a reference distribution $\q$ over $[\ab]$, but only asks about the existence of \emph{independent binnings}: two interval partitions $I_1,\dots,I_\numparts$ and $I'_1,\dots,I'_\numparts$ such that $\p(I_j)=\q(I'_j)$ for all $j$.

We note that our upper bound (\cref{theo:upper:bound}) easily extends to these two different variants as well; and our lower bound (\cref{theo:lower:bound}) applies to the last independent binnings variant as well.

\subsection{Our results and techniques}

In this paper, we establish nearly matching upper and lower bounds on the $\binidentityproblem$ problem. Our first result is an efficient algorithm for testing the property:
\begin{theorem}
  \label{theo:upper:bound}
  For every domain size $\ab$, number of bins $1\leq \numparts\leq \ab$, and fixed reference distribution $\q$ over $[\numparts]$, there exists a (computationally efficient) algorithm for testing identity-up-to-binning to $\q$ with sample complexity $\bigO{\numparts/\dst^2}$, where $\dst\in(0,1]$ is the distance parameter.
\end{theorem}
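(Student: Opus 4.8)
The plan is to reduce the testing question to \emph{estimating a single real parameter} --- the distance from $\p$ to the target class --- and then to show this parameter can be estimated to additive error $\dst/2$ using $\bigO{\numparts/\dst^2}$ samples. Throughout, for a distribution $r$ over $[\ab]$ and an ordered interval partition $B=(I_1,\dots,I_\numparts)$ of $[\ab]$ write $r_B$ for the binned distribution over $[\numparts]$ with $r_B(j)=r(I_j)$, call $B$ \emph{admissible} if $I_j\neq\emptyset$ whenever $\q(j)>0$ (admissible partitions exist since $\ab\geq\numparts$), and put $\Phi(r)\eqdef\min_{B\text{ admissible}}\normone{r_B-\q}$. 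Writing $\property$ for the set of distributions over $[\ab]$ admitting a binning to $\q$, the first step is the combinatorial identity
\[
  \dtv(\p,\property)=\tfrac12\,\Phi(\p).
\]
The bound ``$\geq$'' is the data-processing inequality: any $\p'\in\property$ has a binning $B'$ with $\p'_{B'}=\q$ (necessarily admissible), so $\dtv(\p,\p')\geq\dtv(\p_{B'},\p'_{B'})=\tfrac12\normone{\p_{B'}-\q}\geq\tfrac12\Phi(\p)$. For ``$\leq$'', let $B^\star$ be an admissible minimizer and obtain $\p'$ from $\p$ by rescaling the mass inside each bin $I^\star_j$ to total mass $\q(j)$ (placing mass $\q(j)$ on an arbitrary element of $I^\star_j$ if $\p(I^\star_j)=0<\q(j)$, and emptying $I^\star_j$ if $\q(j)=0$); then $\p'_{B^\star}=\q$, so $\p'\in\property$, and $\dtv(\p,\property)\leq\dtv(\p,\p')=\tfrac12\normone{\p-\p'}=\tfrac12\sum_{j}\abs{\p(I^\star_j)-\q(j)}=\tfrac12\Phi(\p)$.

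The parameter $\Phi(r)$ is efficiently computable: its defining minimum depends only on the $\bigO{\abs{\supp{r}}}$ distinct prefix masses attained by the CDF of $r$, and a dynamic program over states ``(bin index, attained prefix mass)'' --- with a small amount of additional bookkeeping to enforce admissibility --- solves it in $\poly(\numparts,\abs{\supp{r}})$ time. The tester draws $m=C\numparts/\dst^2$ samples for a large enough absolute constant $C$, forms the empirical distribution $\hat{\p}$, and returns \accept iff $\Phi(\hat{\p})\leq\dst$. Correctness in both directions reduces to controlling the single random variable
\[
  \Delta\eqdef\sup_{B}\normone{\hat{\p}_B-\p_B},
\]
the supremum being over \emph{all} interval partitions of $[\ab]$ into $\numparts$ parts. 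Indeed, for every admissible $B$ one has $\bigl|\,\normone{\hat{\p}_B-\q}-\normone{\p_B-\q}\,\bigr|\leq\normone{\hat{\p}_B-\p_B}\leq\Delta$, hence $\abs{\Phi(\hat{\p})-\Phi(\p)}\leq\Delta$; so if $\p\in\property$ then $\Phi(\p)=0$ and $\Phi(\hat{\p})\leq\Delta$, while if $\dtv(\p,\property)\geq\dst$ then $\Phi(\p)=2\dtv(\p,\property)\geq2\dst$ and $\Phi(\hat{\p})\geq2\dst-\Delta$. It therefore suffices to prove that $\Delta\leq\dst/2$ with probability at least $2/3$ once $C$ is a large enough constant.

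This last bound is the crux, and the step I expect to be the main obstacle. The naive approach --- bounding each of the $\numparts$ bin-deviations separately via the Dvoretzky--Kiefer--Wolfowitz inequality $\norminf{\hat{F}-F_\p}=\bigO{1/\sqrt m}$ --- only gives $\Delta=\bigO{\numparts/\sqrt m}$, i.e.\ $m=\bigO{\numparts^2/\dst^2}$, losing a factor $\numparts$. The point is that interval partitions form a low-complexity family: by $\ell_1/\ell_\infty$ duality, $\Delta=\sup_{\phi\in\mathcal F}\abs{(\hat{\p}-\p)(\phi)}$, where
\[
  \mathcal F=\Bigl\{\,{\textstyle\sum_{j=1}^{\numparts}\sigma_j\indicSet{I_j}}\;:\;(I_j)\text{ an interval partition of }[\ab]\text{ into }\numparts\text{ parts},\ \sigma\in\{-1,1\}^{\numparts}\Bigr\}
\]
is the class of $\{-1,1\}$-valued functions on $[\ab]$ with at most $\numparts$ constancy intervals; since its positive sets are unions of at most $\numparts$ intervals, $\mathcal F$ is a uniformly bounded VC class of dimension $\bigO{\numparts}$. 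The Dudley-type uniform-deviation bound for such classes yields $\expect{\Delta}=\bigO{\sqrt{\numparts/m}}$, with \emph{no} logarithmic overhead and \emph{no} dependence on $\ab$ --- the entropy integral $\int_0^1\sqrt{\log(1/u)}\,du$ converges, and indeed the $\ab$-free, log-free rate already shows up for a fixed partition, where $\sum_j\var(\hat{\p}(I_j))=\tfrac1m\sum_j\p(I_j)(1-\p(I_j))\leq\tfrac1m$ and Cauchy--Schwarz converts this $\ell_2$ control into the $\sqrt{\numparts/m}$ rate, the VC structure making it uniform over $B$. Finally, $\Delta$ changes by at most $2/m$ when one sample is altered, so McDiarmid's inequality upgrades the expectation bound to $\Delta=\bigO{\sqrt{\numparts/m}}\leq\dst/2$ with probability $2/3$ once $m\geq C\numparts/\dst^2$, completing the argument. (For completeness one can sidestep the VC machinery entirely: applying the ordinary multinomial deviation bound to a \emph{fixed} optimal binning of $\p$ gives $\Phi(\hat{\p})\leq\Phi(\p)+\bigO{\sqrt{\numparts/m}}$ directly, so the uniform bound is only needed for soundness.)
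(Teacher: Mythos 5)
Your proposal is correct and follows essentially the same route as the paper: learn $\p$ in the $\akdist{\numparts}$-distance (your $\Delta$ is exactly $\norm{\hat{\p}-\p}_{\akdist{\numparts}}$, and your VC/Dudley/McDiarmid sketch re-derives the $\bigO{\numparts/\dst^2}$ learning bound the paper cites from the VC inequality), then minimize $\sum_j\abs{\hat{\p}(I_j)-\q(j)}$ over admissible interval partitions by dynamic programming. The one cosmetic improvement is that you package the paper's one-sided Fact~2.1 and its greedy mass-moving soundness step into the clean exact identity $\totalvardist{\p}{\cP_\q}=\tfrac12\Phi(\p)$, which makes the completeness/soundness analysis a single perturbation bound $\abs{\Phi(\hat{\p})-\Phi(\p)}\leq\Delta$.
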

Crucially, this upper bound is \emph{independent} of the underlying domain size $\ab$ of the unknown distribution, and only depends, linearly, on the target number of bins $\numparts$. This is to be compared to the ``standard'' identity testing problem, which is known to have sample complexity $\bigTheta{\sqrt{\ab}}$. One may wonder if the dependence on the number of bins in the above theorem is necessary, or if one could hope to achieve a \emph{sublinear} dependence on $\numparts$. Our next (and main) result shows that a near-linear dependence is unavoidable, ruling out any strongly sublinear dependence on $\numparts$:
\begin{theorem}[Informal version of~\cref{theo:lower:bound:detailed}]
  \label{theo:lower:bound}
  For every $\numparts\geq 1$, there exists $\ab = \ab(\numparts)$ and an absolute constant $\dst_0>0$ such that the following holds. There exists a reference distribution $\q$ over $[\numparts]$ such that any algorithm for testing identity-up-to-binning to $\q$ to distance $\dst_0$ must have sample complexity $\bigOmega{\numparts^{1-o(1)}}$.
\end{theorem}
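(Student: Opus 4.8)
The plan is the standard two-families method for distribution-testing lower bounds: build a ``yes'' prior $\dyes$ supported on distributions that satisfy identity-up-to-binning to a chosen reference $\q$, a ``no'' prior $\dno$ supported on distributions that are $\dst_0$-far from the property, and show that with $s=\numparts^{1-o(1)}$ samples the laws on transcripts induced by $\dyes$ and $\dno$ are $o(1)$-close in total variation, so that no tester can succeed. I would take $\q$ to be \emph{uniform} on $[\numparts]$ and $\ab=\ab(\numparts)$ a suitably large function of $\numparts$; then every bin of a valid partition must carry mass exactly $1/\numparts$, which is what makes the construction ``rigid.''

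The instances are block-structured. Partition $[\ab]$ into $\Theta(\numparts)$ equal-length consecutive blocks; the distributions we build put zero mass near block boundaries, and each block carries total mass $c/\numparts$ for a small constant $c\ge 2$, so that (morally) a valid partition must carve each block into $c$ sub-intervals of mass $1/\numparts$. Fix a multiset $V=\{v_1,\dots,v_r\}$ with $\sum_i v_i=c/\numparts$, each $v_i\le 1/\numparts$, and $r=r(\numparts)\to\infty$ slowly. Inside a block we place $r$ atoms at $r$ uniformly random positions (the block being much longer than $r$) and assign them the masses of $V$ in some order: a ``yes'' block uses a (carefully weighted) random order from the class $\mathcal S$ of \emph{splittable} orders---those whose nested prefixes of partial sums hit $1/\numparts,\dots,(c-1)/\numparts$ exactly---while a ``no'' block uses a weighted random order from the class $\mathcal U$ of \emph{robustly unsplittable} orders---those whose every prefix sum is $\Omega(1/\numparts)$-far from each of those targets. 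Each prior then drops an independent such block into every one of the $\Theta(\numparts)$ block slots. A ``yes'' distribution then lies in the property (inside each block the chosen splittable order furnishes $c$ intervals of mass $1/\numparts$; concatenate them over blocks, padding the empty head/tail arbitrarily). A ``no'' distribution is $\dst_0$-far: rigidity---which requires a lemma exploiting the mass-isolation of the blocks to control how bins may straddle---forces any candidate partition to cut each block into essentially $c$ pieces of mass $1/\numparts$; inside a ``no'' block that costs $\Omega(1/\numparts)$ of editing, and this cost cannot be ``borrowed'' from a neighbour without breaking that neighbour's mass budget, so over $\Theta(\numparts)$ blocks the costs add up to $\Omega(1)$, which is where $\dst_0$ comes from.

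For indistinguishability, observe that since the atom positions inside each block are uniformly random in a long block, the part of a transcript inside a given block carries no information beyond the left-to-right multiplicity pattern of the samples landing there; averaged over positions, the probability of a pattern $(a_1,\dots,a_\ell)$ (multiplicities of the $\ell$ hit atoms, in order, with $\sum_j a_j=m$) under an atom-mass order $w$ is a fixed linear combination of the ordered power sums $\sum_{i_1<\dots<i_\ell}\prod_j w_{i_j}^{a_j}$. The heart of the proof is to choose $V$, $\mathcal S$, $\mathcal U$ and the weightings so that the \emph{averages} of all these quantities over a ``yes'' block and over a ``no'' block coincide for every pattern of total degree at most $T-1$, with $T=T(r)\to\infty$; this makes a ``yes'' block and a ``no'' block perfectly indistinguishable from any $\le T-1$ samples falling inside it, and (both block types having the same total mass) the per-block sample counts also have identical joint law. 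Granting this, Poissonize the number of samples: the per-block counts become independent $\poisson{\Theta(s/\numparts)}$, so if $s=o(\numparts^{1-1/T})$ then by a union bound, with probability $1-o(1)$ no block receives $\ge T$ samples, and on that event the transcript has identical law under $\dyes$ and $\dno$. Hence the transcripts are $o(1)$-close and the sample complexity must be $\bigOmega{\numparts^{1-1/T}}=\numparts^{1-o(1)}$.

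I expect the real obstacle to be the per-block gadget: exhibiting a mass multiset $V$ with a weighted ``splittable'' family and a weighted ``robustly-unsplittable'' family whose ordered power sums match up to a degree $T\to\infty$, while both families are nonempty and large enough to carry the required weightings. This is a moment-matching problem with \emph{order-sensitive} moments, and it is exactly where the gap between the naive $\sqrt{\numparts}$ bound (any clumsy construction is broken by collision or variance statistics) and $\numparts^{1-o(1)}$ has to be paid; I would attack it by spreading $V$ over several geometric scales of masses à la Valiant--Valiant and using the freedom in the order-weights to annihilate the low-degree order-sensitive power sums while keeping the two families on opposite sides of the splittability threshold. The remaining technical point is to make the far-ness analysis airtight---the rigidity lemma that no valid binning can circumvent the block structure cheaply (in particular that straddling bins buy only a bounded amount of flexibility, coupled across blocks)---so that the $\Omega(1/\numparts)$-per-block editing lower bounds genuinely sum to $\Omega(1)$.
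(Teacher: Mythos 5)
Your skeleton tracks the paper's at a high level: block-structured hard instances, per-block indistinguishability of the transcript up to degree $m$ via order-sensitive moments, and a birthday-type closing bound giving $\Omega(\numparts^{1-1/(m+1)})$ with $m\to\infty$ slowly. Two packaging differences are worth noting. You replace the paper's Ramsey-theoretic reduction to ordered-fingerprint-only algorithms (\cref{prop:reduction:ordered:fingerprints:only}, which blows the domain up to a tower-sized $N$) with uniformly random atom positions inside long blocks---a workable alternative route to the same conclusion. And you phrase the hard instances as two priors $\dyes,\dno$ relative to a uniform reference $\q$, whereas the paper simply exhibits a single pair $(\pbig,\qbig)$, with $\qbig$ itself serving as the reference and $\pbig$ shown $\dst_0$-far from $\cP_{\qbig}$ via~\cref{claim:dist:eps0}.

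The gap is exactly the lemma you flag as ``the real obstacle,'' and the attack you propose is unlikely to close it. You want a weighted family of splittable orderings and a weighted family of robustly unsplittable orderings of a mass multiset $V$ whose ordered power sums $\sum_{i_1<\cdots<i_t}\prod_j w_{i_j}^{a_j}$ agree for every composition $(a_1,\dots,a_t)$ with $\sum a_j \le T$. There are exponentially many such constraints (one per composition, roughly $2^{T}$ of them), whereas the Valiant--Valiant moment-matching machinery is built to annihilate only the $T$ symmetric power sums $\sum_i w_i^d$; it offers no leverage against the order-sensitive combinatorics, which is precisely the new difficulty here. The paper sidesteps construction entirely: in \cref{lem:two-dist:same-moments} it fixes $\blksz = 5m^2 2^m$, restricts to distributions on $[\blksz]$ whose p.m.f.\ takes only the two values $4/(5\blksz)$ and $6/(5\blksz)$, shows that at least $2^{4\blksz/5}/\blksz$ such profiles are pairwise far from being partial cyclic shifts of one another, and observes that there are at most $(6\blksz)^{m 2^m}$ possible $m$-way moment vectors; the pigeonhole principle then hands you two profiles that are far apart yet moment-equal. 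This nonconstructive counting step is what pays for the exponential number of order-sensitive constraints, and it is what your plan is missing. The block size $5m^2 2^m$---and hence the $2^{O(\sqrt{\log\numparts})}$ loss in the final bound---is precisely the price the pigeonhole exacts, so any improvement would have to come from a genuinely different handle on these ordered moments, not from recycling the symmetric-moment toolkit.
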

\noindent Note that by standard techniques, this readily imply a lower bound of $\bigOmega{\numparts^{1-o(1)}/\dst}$, for all $\dst \in(0,\dst_0]$. We further conjecture the tight bound to be $\bigOmega{\numparts/\dst^2}$ (i.e., matching our upper bound). It is worth noting that the obvious and natural approach, a reduction from (standard) identity testing, would fall short of this goal, as it would only lead to a much weaker $\Omega(\sqrt{\numparts}/\dst^2)$ lower bound.

\paragraph{Our techniques.}

Our upper bound proceeds via the ``testing-by-learning'' paradigm.  Specifically, given a reference distribution $\q$ over $[\numparts]$, and sample access to an unknown distribution $\p$ over $[\ab]$, we learn a hypothesis distribution $\hat{\p}$ such that, informally, for every binning of $[\ab]$ into $\numparts$ intervals, the distributions $\p$ and $\hat{\p}$ are close.  (More formally, the distribution $\hat{\p}$ to close to $\p$ in $\akdist{\numparts}$-distance, which we define later.) Since we have exact access to $\hat{\p}$, we can enumeratively test every potential binning.  Although this does not save on the query complexity, we are able to use dynamic programming to find the best binning in time polynomial in $\ab$ and $\numparts$. 

The proof of our lower bound is significantly more involved and proceeds in three stages, which we outline here.  We first show that, if $\ab$ is sufficiently large as a function of $\numparts$, then we may assume that the algorithm only looks at the ordering of the samples received from $[\ab]$, instead of at their actual ``names'' (we refer to the ordering as the \emph{ordered fingerprint} of the sample).  Our bound uses a Ramsey theory argument very similar to that of~\cite{DiakonikolasKN15}.

We then show, for every integer $m$, the existence of two different distributions $\p'$ and $\q'$ over $[\poly(m) \cdot 2^m]$ such that (i) $\p'$ and $\q'$ are far in total variation distance, and (ii) given the ordered fingerprint of $m$ samples generated by one of $\p'$ and $\q'$, it is information-theoretically impossible to determine which of $\p'$ and $\q'$ generated the ordered fingerprint.  There are $2^m$ such ordered fingerprints, one for every composition of $m$, so it is unclear if the support size of the distributions in this claim can be significantly reduced.  While this is an independently interesting problem, any progress on this problem would only directly result in a decrease of the $o(1)$ exponent in our main theorem.

Finally, using the above pair of distributions as ``buckets'' of two distributions $\p$ and $\q$, respectively, we show that any algorithm distinguishes $\p$ and $\q$ and only considers ordered fingerprints must have access to at least $m+1$ samples from at least one of the (assuming $m$ is independent of $\numparts$) $\Omega(\numparts)$ buckets.  By collision bounds for the generalized birthday problem (see also~\cite{SuzukiTKT06} for more formal statements), with high probability we need a sample size of $\Omega(k^{1-1/(m+1)})$ to ensure that the algorithm succeeds.

\subsection{Future directions}

We conclude with a few research directions we deem particularly promising. A first natural question is to understand the analogue of the $\binidentityproblem$ problem for \emph{closeness} testing, that is, when both $\p$ (over $[\ab]$) and $\q$ (over $[\numparts]$) are unknown and available only through i.i.d. samples. This would in particular capture situations where data is collected from two different sources (or sensors), each using a different discretization scheme or with different precision, and one aims at deciding whether the underlying distributon is the same.

Another avenue is to consider the \emph{tolerant} testing version of the problem, that is, to decide whether $\p$ is \emph{close} to a binning of $\q$ (versus far from any such binning). It is known that tolerant identity testing has significantly larger (namely, nearly linear in the domain size) sample complexity than identity testing; whether this is still the case for their identity-up-to-binning analogues is an intriguing question.

Lastly, it would be interesting to generalize the question to other partially-ordered domains (e.g., the hypergrid $[\ab]^d$, or the Boolean hypercube $\{0,1\}^\ab$), for the suitable notion of ``interval'' in these posets.

\section{Preliminaries}
All throughout the paper, we write $\log$ for the binary logarithm. Given integers $\ab$ and $\numparts$, we write $\mathcal{J}_{\ab,\numparts}$ for the set of all $\binom{\ab+\numparts-1}{\ab}$ partitions of $[\ab]$ in $\numparts$ consecutive (and possibly empty) pairwise disjoint intervals. Recall that the total variation distance between two distributions $\p,\p'$ over $[\ab]$ is $\totalvardist{\p_1}{\p_2} = \sup_{S\subseteq [\ab]}(\p(S)-\p'(S))$, and is equal to $\frac{1}{2}\sum_{i=1}^\ab \dabs{\p(i)-\p'(i)}$, half the $\lp[1]$ distance between their probability mass functions.

The $\binidentityproblem(\q,\numparts,\dst)$ problem we consider in this paper is then formally defined as follows. Given a reference distribution $\q$ on $[\numparts]$, we consider the property
\begin{equation}
\cP_{\q}\eqdef\setOfSuchThat{ \p'\in\distribs{\ab} }{ \exists (I_1,\dots,I_\numparts) \in \mathcal{J}_{\ab,\numparts}\,,\;\; \max_{j\in[\numparts]} \abs{ \p'(I_j) - \q(j) } = 0 }
\end{equation}
The question is then, given a distance parameter $\dst\in(0,1]$, to distinguish (i)~$\p\in\cP_{\q}$ from (ii)~$\totalvardist{\p}{\cP_{\q}} > \dst$, where $\totalvardist{\p}{\cP_{\q}} = \min_{\p'\in\cP_{\q}}\totalvardist{\p}{\p'}$. 
We will rely extensively, for our lower bounds, on the below fact, which allows us to lower bound $\totalvardist{\p}{\cP_{\q}}$ by a more actionable quantity.\begin{fact}\label{fact:dist}
For any distributions $\p$ and $\q$ (over $[\ab]$ and $[\numparts]$, respectively) we have
\[
\totalvardist{\p}{\cP_{\q}} \geq \frac{1}{2}\dist{\p}{\q} \eqdef \min_f \sum_{j=1}^\numparts \big| \q(j) - \sum_{t: f(t) = j} \p(t) \big|,
\]
where the minimum is taken over all nondecreasing functions $f\colon [\ab]\to[\numparts]$.
\end{fact}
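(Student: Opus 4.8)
The plan is to turn any distribution $\p'\in\cP_{\q}$ into a nondecreasing relabeling $f\colon[\ab]\to[\numparts]$, compare the bin masses of $\p$ and $\p'$ coordinate by coordinate via the triangle inequality, and then take the infimum over $\p'$.

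First I would record the (elementary) correspondence between $\mathcal{J}_{\ab,\numparts}$ and nondecreasing functions $f\colon[\ab]\to[\numparts]$: a partition $(I_1,\dots,I_\numparts)$ of $[\ab]$ into $\numparts$ consecutive, possibly empty, intervals is exactly the same data as the nondecreasing map $f$ defined by $f(t)=j$ for the unique $j$ with $t\in I_j$ (equivalently, $f^{-1}(j)=I_j$). In particular $\sum_{t\colon f(t)=j}\p(t)=\p(I_j)$ for each $j$, so the quantity minimized in $\dist\p\q$ equals $\min_{(I_1,\dots,I_\numparts)\in\mathcal{J}_{\ab,\numparts}}\sum_{j}\abs{\q(j)-\p(I_j)}$.

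Next, fix an arbitrary $\p'\in\cP_{\q}$ and choose, by definition of $\cP_{\q}$, a partition $(I_1,\dots,I_\numparts)\in\mathcal{J}_{\ab,\numparts}$ with $\p'(I_j)=\q(j)$ for every $j\in[\numparts]$; let $f$ be the associated nondecreasing function. Since $\dist\p\q$ is a minimum over all such $f$,
\begin{align*}
\dist\p\q
&\le \sum_{j=1}^{\numparts}\Bigl|\,\q(j)-\sum_{t\colon f(t)=j}\p(t)\,\Bigr|
= \sum_{j=1}^{\numparts}\bigl|\p'(I_j)-\p(I_j)\bigr| \\
&\le \sum_{j=1}^{\numparts}\sum_{t\in I_j}\bigl|\p'(t)-\p(t)\bigr|
= \normone{\p-\p'}
= 2\,\totalvardist\p{\p'},
\end{align*}
where the second inequality is the triangle inequality and the penultimate equality uses that the $I_j$ partition $[\ab]$. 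Hence $\dist\p\q\le 2\,\totalvardist\p{\p'}$ for \emph{every} $\p'\in\cP_{\q}$, and minimizing the right-hand side over $\p'\in\cP_{\q}$ yields $\dist\p\q\le 2\,\totalvardist\p{\cP_{\q}}$, which is the claimed bound.

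The argument is short and I do not anticipate any genuine obstacle. The only point that warrants a moment's care is the direction of the correspondence above: the binning witnessing $\p'\in\cP_{\q}$ must give a \emph{valid}, i.e.\ nondecreasing, $f$, and this is precisely why the partitions in $\mathcal{J}_{\ab,\numparts}$ are required to consist of \emph{consecutive} intervals. (If $\numparts\le\ab$ then $\cP_{\q}\neq\emptyset$, so the final minimum is over a nonempty set; in the degenerate case $\cP_{\q}=\emptyset$ both sides are $+\infty$ and there is nothing to prove.)
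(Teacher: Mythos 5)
Your proof is correct and follows essentially the same route as the paper: identify the binning witnessing $\p'\in\cP_{\q}$ with a nondecreasing $f$, compare bin masses via the triangle inequality to get $\sum_j\lvert\q(j)-\p(I_j)\rvert\le 2\,\totalvardist{\p}{\p'}$, and optimize. The only cosmetic difference is that you quantify over an arbitrary $\p'\in\cP_{\q}$ and take the infimum at the end, whereas the paper picks the minimizer $\p^\ast$ up front; the substance is identical.
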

\begin{proof}
Fix $\p,\q$ as in the statement. One can think of such a nondecreasing $f$ as defining a decomposition of $[\ab]$ into $\numparts$ disjoint (and possible empty) intervals $I_1,\ldots,I_\numparts$, where $I_j = f^{-1}(j)$. In particular, let $\p^\ast\in\cP_{\q}$ such that $\totalvardist{\p}{\p^\ast}=\totalvardist{\p}{\cP_{\q}}$, and $I^\ast_1,\dots,I^\ast_\numparts$ the corresponding partition for $\p^\ast$. Define $f^\ast\colon [\ab]\to[\numparts]$ from this partition by setting $f(t) \eqdef \sum_{j=1}^\numparts j\indicSet{I^\ast_j}(t)$: $f^\ast$ is then non-decreasing, and
\[
    \sum_{j=1}^\numparts \big| \q(j) - \sum_{t: f^\ast(t) = j} \p(t) \big| 
     = \sum_{j=1}^\numparts \big| \q(j) - \p(I^\ast_j) \big|
     = \sum_{j=1}^\numparts \big| \p^\ast(I^\ast_j) - \p(I^\ast_j) \big|
     \leq \sum_{i=1}^\ab \big| \p^\ast(i) - \p(i) \big|
\]
where the last equality is due to the fact that $\p^\ast(I^\ast_j)=\q(j)$ for all $j\in[\numparts]$ (since $\p^\ast\in\cP_{\q}$), and last inequality is the triangle inequality. Since  the RHS equals $2\totalvardist{\p}{\p^\ast}$, 
we get $\min_f \sum_{j=1}^\numparts | \q(j) - \sum_{t: f(t) = j} \p(t) | \leq 2\totalvardist{\p}{\cP_{\q}}$.
\end{proof}
 
\section{Algorithms}

In this section, we prove~\cref{theo:upper:bound}, restated below:
\begin{theorem}
  \label{theo:upper:bound:restated}
  For any fixed reference distribution $\q$ over $[\ab]$ and number of bins $1\leq \numparts\leq \ab$, there exists a (computationally efficient) algorithm for $\binidentityproblem(\ab,\q,\numparts,\dst)$ with sample complexity $\bigO{\numparts/\dst^2}$.\end{theorem}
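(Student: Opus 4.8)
The plan is to instantiate the \emph{testing-by-learning} paradigm using the $\akdist{\numparts}$-distance
\[
  \akdist{\numparts}(r,r') \eqdef \max_{(I_1,\dots,I_\numparts)\in\mathcal{J}_{\ab,\numparts}} \sum_{j=1}^\numparts \abs{r(I_j)-r'(I_j)}\,.
\]
First I would run a \emph{learning} step: draw $m=\bigO{\numparts/\dst^2}$ i.i.d.\ samples from $\p$ and let $\hat\p$ be the resulting empirical distribution. Since $\akdist{\numparts}(r,r')\le 2\max\bigl\{\,\abs{r(A)-r'(A)}\;:\;A\text{ is a union of at most }\numparts\text{ intervals of }[\ab]\,\bigr\}$ and the set system of such unions has VC dimension $\bigO{\numparts}$, the Vapnik--Chervonenkis inequality (uniform convergence over VC classes) guarantees that $\akdist{\numparts}(\p,\hat\p)\le\dst/8$ with probability at least $2/3$. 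Crucially, this sample bound does not depend on $\ab$ --- which is precisely the source of the $\ab$-independence in the statement --- and from here on we reason about the explicitly known distribution $\hat\p$.

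The tester outputs \accept if and only if $\totalvardist{\hat\p}{\cP_{\q}}\le\dst/2$, and the second step is to verify that this quantity is computable exactly and efficiently. For a fixed binning $(I_1,\dots,I_\numparts)\in\mathcal{J}_{\ab,\numparts}$, the closest distribution that agrees with it on all bin masses is obtained by replacing $\hat\p$, inside each $I_j$ independently, by the nearest (in $\lp[1]$) nonnegative vector on $I_j$ of total mass $\q(j)$; this costs exactly $\abs{\hat\p(I_j)-\q(j)}$, and is possible unless $I_j=\emptyset$ while $\q(j)>0$. Therefore
\[
  \totalvardist{\hat\p}{\cP_{\q}} \;=\; \tfrac12\min\Bigl\{\;\textstyle\sum_{j=1}^\numparts\abs{\hat\p(I_j)-\q(j)}\;:\;(I_1,\dots,I_\numparts)\in\mathcal{J}_{\ab,\numparts},\ I_j\neq\emptyset\text{ whenever }\q(j)>0\;\Bigr\}\,,
\]
and this minimum is found by a dynamic program whose states are (a right endpoint in $\{0,1,\dots,\ab\}$, a number of bins used so far), running in time $\bigO{\ab^2\numparts}$ after precomputing the prefix sums of $\hat\p$ (in fact in $\poly(\numparts,1/\dst)$ time, as only the at most $m$ support points of $\hat\p$ need be considered as breakpoints).

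The heart of the correctness proof is a Lipschitz property: for all distributions $r,r'$ over $[\ab]$, $\abs{\totalvardist{r}{\cP_{\q}}-\totalvardist{r'}{\cP_{\q}}}\le\tfrac12\akdist{\numparts}(r,r')$. To prove it, let $r^\ast\in\cP_{\q}$ attain $\totalvardist{r}{\cP_{\q}}$, with witnessing binning $(I^\ast_1,\dots,I^\ast_\numparts)$; since $r^\ast(I^\ast_j)=\q(j)$, the interval $I^\ast_j$ is automatically nonempty whenever $\q(j)>0$. Transplant this binning onto $r'$, replacing $r'$ within each $I^\ast_j$ by a nonnegative vector of mass $\q(j)$ as above, to obtain some $\tilde r\in\cP_{\q}$ with $\totalvardist{r'}{\tilde r}=\tfrac12\sum_j\abs{r'(I^\ast_j)-\q(j)}$. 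By the triangle inequality applied bin by bin, $\sum_j\abs{r'(I^\ast_j)-\q(j)}\le\akdist{\numparts}(r,r')+\sum_j\abs{r(I^\ast_j)-r^\ast(I^\ast_j)}\le\akdist{\numparts}(r,r')+2\totalvardist{r}{\cP_{\q}}$, so $\totalvardist{r'}{\cP_{\q}}\le\tfrac12\akdist{\numparts}(r,r')+\totalvardist{r}{\cP_{\q}}$; exchanging $r$ and $r'$ proves the property. Correctness is then immediate: if $\p\in\cP_{\q}$ then $\totalvardist{\hat\p}{\cP_{\q}}\le\tfrac12\akdist{\numparts}(\p,\hat\p)\le\dst/16<\dst/2$ and we \accept; if $\totalvardist{\p}{\cP_{\q}}>\dst$ then $\totalvardist{\hat\p}{\cP_{\q}}\ge\dst-\tfrac12\akdist{\numparts}(\p,\hat\p)\ge\tfrac{15}{16}\dst>\dst/2$ and we \reject{} --- in both cases except with the $1/3$ failure probability of the learning step, which is amplifiable by repetition.

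I expect the main obstacle to be this Lipschitz property and, relatedly, the realization that the statistic to estimate is $\totalvardist{\hat\p}{\cP_{\q}}$ itself, rather than the superficially more convenient lower bound $\tfrac12\dist{\hat\p}{\q}$ of \cref{fact:dist}. The inequality in \cref{fact:dist} is genuinely one-sided, and the gap can be nearly maximal: for $\ab=\numparts=2$ and $\q=(1/10,9/10)$, the point mass $\p=(1,0)$ has $\tfrac12\dist{\p}{\q}=1/10$ but $\totalvardist{\p}{\cP_{\q}}=9/10$, because $\dist{\cdot}{\cdot}$ assigns no cost to the binning $(\emptyset,\{1,2\})$ even though no distribution realizes it. Ruling out exactly these degenerate binnings --- both in the dynamic program and, for free, in the Lipschitz argument through the optimal binning of $r^\ast$ --- is the one place where the argument departs from the standard learn-then-enumerate template; the uniform-convergence bound and the dynamic program are otherwise routine.
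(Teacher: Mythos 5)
Your proposal is correct and follows essentially the same route as the paper: learn $\hat\p$ in $\akdist{\numparts}$-distance via the VC inequality with $\bigO{\numparts/\dst^2}$ samples, then minimize $\sum_j\abs{\hat\p(I_j)-\q(j)}$ by dynamic programming over interval partitions restricted so that $I_j\neq\emptyset$ whenever $\q(j)>0$, using the same add/remove-mass-within-bins construction to relate bin-mass discrepancy to total variation distance. Your packaging of that construction as an exact formula for $\totalvardist{\hat\p}{\cP_{\q}}$ plus a Lipschitz property of $\totalvardist{\cdot}{\cP_{\q}}$ in $\akdist{\numparts}$ is a slightly cleaner reorganization of the paper's separate completeness and soundness arguments, but the underlying ideas are identical.
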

\begin{proof}
  We will use the following notion of \emph{$\akdist{\ell}$-distance}, which interpolates between Kolmogorov distance ($\ell=2$) and (twice the) total variation ($\ell=\ab$)~\cite{DevroyeLugosi01,ChanDSS12,DiakonikolasKN15a,DiakonikolasKN15}: for any two distributions $\p',\q'$ over $[\ab]$,
  \begin{equation}
      \norm{\p'-\q'}_{\akdist{\ell}} \eqdef \max_{ (I_1,\dots,I_\ell) \in \mathcal{J}_{\ab,\ell} } \sum_{j=1}^\ell \abs{ \p'(I_j) - \q'(I_j) }
\end{equation}
One can check that this defines a \textit{bona fide} norm. Further, it is known, from the VC inequality, that $\bigO{\ell/\dst^2}$ samples are sufficient to learn any distribution in $\akdist{\ell}$-distance $\dst$ (and with failure probability $1/6$)~\cite{DevroyeLugosi01,ChanDSS12}, and, further, that the empirical estimator achieves this bound. Using this, the algorithm is as follows:
\begin{enumerate}
  \item Learn the unknown $\p$ to $\akdist{\numparts}$-distance $\dst$, with failure probability $1/6$. Call the result (the empirical estimator, which is a distribution over $[\ab]$) $\hat{\p}$.
  \item Compute the minimum $\Delta_{\q}(\mathcal{I},\hat{\p}) \eqdef  \sum_{j=1}^\numparts \abs{ \hat{\p}(I_j) - \q(j) }$ over all partitions $(I_1,\dots,I_\numparts) \in \mathcal{J}_{\ab,\numparts}$ in intervals such that $\abs{I_j} > 0$ whenever $\q(j) > 0$.
  \item If there exists $\mathcal{I}$ such that $\Delta_{\q}(\mathcal{I},\hat{\p}) \leq \dst$, return \accept; otherwise, return \reject.
\end{enumerate}
Note that only the first step requires samples from $\p$, so that the sample complexity is indeed $\bigO{\numparts/\dst^2}$. The second step is purely computational, and can be implemented in time $\poly(\ab,\numparts)$ via a simple dynamic programming approach.

We now argue correctness. With probability at least $5/6$, the first step produced a correct $\hat{\p}$, i.e., one that is $(\dst/4)$-close to $\p$ in $\akdist{\numparts}$-distance; we hereafter assume this holds.

\myparagraph{Completeness.} Suppose $\p\in\cP_{\q}$, and let $\mathcal{I}^\ast=(I^\ast_1,\dots,I^\ast_\numparts)\in \mathcal{J}_{\ab,\numparts}$ be any partition witnessing it. Note that this partition then satisfies $\dabs{I^\ast_j} > 0$ for all $j$ such that $\q(j) > 0$, as otherwise the distance after binning is positive. Then,
\[
    \Delta_{\q}(\mathcal{I}^\ast,\hat{\p})
    = \sum_{j=1}^\numparts \abs{ \hat{\p}(I^\ast_j) - \q(j) }
    \leq \sum_{j=1}^\numparts \abs{ \hat{\p}(I^\ast_j) -\p(I^\ast_j) } + \sum_{j=1}^\numparts \abs{ \p(I^\ast_j)  - \q(j) }
    \leq \norm{\hat{\p}-\p}_{\akdist{\numparts}}
    \leq \dst
\]
the second-to-last inequality by definition of $\akdist{\numparts}$-distance and the fact that $\sum_{j=1}^\numparts \dabs{ \p(I^\ast_j)  - \q(j) } = 0$. Therefore, the algorithm will find a good partition and output \accept.

\myparagraph{Soundness.} Suppose now by contrapositive that the algorithm outputs $\accept$. This means it found, in step 2, some partition $\mathcal{I}^\ast=(I^\ast_1,\dots,I^\ast_\numparts)\in \mathcal{J}_{\ab,\numparts}$ such that $\Delta_{\q}(\mathcal{I}^\ast,\hat{\p}) \leq \dst/4$. But then,
\[
    \sum_{j=1}^\numparts \abs{ \p(I^\ast_j)  - \q(j) } = \sum_{j=1}^\numparts \abs{ \p(I^\ast_j) - \hat{\p}(I^\ast_j) } + \sum_{j=1}^\numparts\abs{ \hat{\p}(I^\ast_j)  - \q(j) }
    \leq \norm{\hat{\p}-\p}_{\akdist{\numparts}} + \Delta_{\q}(\mathcal{I}^\ast,\hat{\p})
    \leq 2\dst
\]
the second-to-last inequality again by definition of $\akdist{\numparts}$-distance. We claim that this implies that $\p$ is $\dst$-close (in total variation distance) to some $\p^\ast\in\cP_{\q}$. Indeed, we can build $\p^\ast$ in a greedy fashion from $\p$: as long as there exist $I^\ast_{j_1}$, $I^\ast_{j_2}$ such that $\p(I^\ast_{j_1}) > \q(j_1)$ but $\p(I^\ast_{j_2}) < \q(j_2)$, we move $\delta \eqdef \min(\dabs{\p(I^\ast_{j_1}) - \q(j_1)},\dabs{\p(I^\ast_{j_2}) - \q(j_2)})$ probability mass from (arbitrary) elements of $I^\ast_{j_1}$ to an arbitrary element of $I^\ast_{j_2}$. Here, we used our condition on the partition found in step 2, which ensures $I^\ast_{j_2}$ is non-empty. This incurs total variation $\delta$ (from $\p$), and reduces $\sum_{j=1}^\numparts \dabs{ \p(I^\ast_j)  - \q(j) }$ by $2\delta$. Repeating until it is no longer possible, we obtain the claimed $\p^\ast$, and therefore that $\totalvardist{\p}{\cP_{\q}} \leq \dst$.
\end{proof}
\begin{remark}
As mentioned in the introduction, it is straightforward to adapt the algorithm and argument to either the ``simultaneous binning'' and ``independent binnings'' variants of the problem, leading to the same sample complexity upper bound (and time complexity) for these related questions.
\end{remark}

\section{Lower Bounds}

In this section, we prove our main theorem (\cref{theo:lower:bound}), restated below:
\begin{theorem}
  \label{theo:lower:bound:detailed}
  For every $\numparts\geq 1$, there exists $\ab = \ab(\numparts)$ such that the following holds. There exists a reference distribution $\q$ over $[\numparts]$ such that any algorithm for $\binidentityproblem(\ab,\q,\numparts,\dst)$ must have sample complexity $\frac{\numparts}{2^{O(\sqrt{\log\numparts})}}$.
\end{theorem}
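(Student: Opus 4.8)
The plan is to follow the three-stage strategy sketched above.

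\textbf{Stage 1 (reduction to order-based testers).} Fix the target bound $\ns^\ast=\numparts/2^{c\sqrt{\log\numparts}}$ we wish to prove, and view a candidate tester on $\ns^\ast$ samples as a map $([\ab])^{\ns^\ast}\to[0,1]$, namely its acceptance probability. Following~\cite{DiakonikolasKN15}, I would run an iterated hypergraph Ramsey argument --- for each $1\le d\le\ns^\ast$, colour the $d$-element subsets of $[\ab]$ by the tuple of acceptance probabilities, discretised to a fine fixed granularity, over all $\le 2^{\ns^\ast}$ ordered fingerprints with $d$ distinct values --- to extract, as long as $\ab$ is a large enough (tower-type) function of $\ns^\ast$ and hence of $\numparts$, a subset $S\subseteq[\ab]$ with $|S|=\ab_0:=\Theta\bigl(\numparts\cdot\poly(m)\,2^m\bigr)$ on which the tester's behaviour depends, up to negligible additive slack, only on the \emph{ordered fingerprint} of the sample: the composition of $\ns^\ast$ recording the multiplicities of the sorted distinct sample values. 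Since all hard instances below are supported on $S$, we may then assume the tester sees only the ordered fingerprint.

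\textbf{Stage 2 (the hard pair --- the crux).} The heart of the proof is: for every integer $m$, construct two distributions $\psmall,\qsmall$ on $[\poly(m)\,2^m]$ that are (a) \emph{statistically indistinguishable from $m$ i.i.d.\ samples via the ordered fingerprint} --- equivalently, for every composition $(c_1,\dots,c_d)$ of $m$, $\sum_{i_1<\dots<i_d}\prod_{\ell}\psmall(i_\ell)^{c_\ell}=\sum_{i_1<\dots<i_d}\prod_{\ell}\qsmall(i_\ell)^{c_\ell}$ --- while (b) having ``opposite'' cumulative profiles at the relevant scale: for a designated level $r^\ast\in(0,1)$, some prefix of $\psmall$ has mass exactly $r^\ast$, whereas every partial sum of $\qsmall$ avoids $(r^\ast-c_0,r^\ast+c_0)$ for an absolute constant $c_0>0$ (so $\qsmall$ carries a heavy atom straddling level $r^\ast$). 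I would aim for a recursive ``doubling'' construction --- obtaining $(\psmall_m,\qsmall_m)$ from $(\psmall_{m-1},\qsmall_{m-1})$ by splitting the domain and arranging cancellations so that the (roughly doubled) system of ordered-moment equations is satisfied and the cumulative gap is maintained --- which is precisely why the support scales like $2^m$; a dimension count ($\poly(m)\,2^m$ free masses against $\approx 2^m$ polynomial constraints, plus genericity) would suffice if only existence were wanted. The exponential support looks genuinely necessary: condition (b) entails an $\Omega(1)$ gap between the CDFs of $\psmall$ and $\qsmall$, which order statistics of $O(1/c_0^2)$ samples would detect unless it is spread over enough atoms to survive the moment matching. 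I expect this construction --- and determining how small the support can be, which controls the $o(1)$ in the exponent --- to be the main obstacle.

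\textbf{Stage 3 (bucketing, distance, and the birthday bound).} Partition $[\ab_0]$ into $\numparts$ consecutive buckets $B_1,\dots,B_\numparts$ of mass $1/\numparts$ each, and take as reference $\q=\bigl(\tfrac{r^\ast}{\numparts},\tfrac1\numparts,\dots,\tfrac1\numparts,\tfrac{2-r^\ast}{\numparts}\bigr)$ over $[\numparts]$, whose prefix sums $\tfrac{j-1+r^\ast}{\numparts}$ all lie strictly inside bucket $j$, at internal level $r^\ast$; hence a bucketed distribution belongs to $\cP_{\q}$ only via the interval partition cutting each $B_j$ at its $r^\ast$-point. Let the ``yes'' instance $\p$ put a scaled copy of $\psmall$ in every bucket, and the ``no'' instance $\p^{\mathrm{no}}$ put a scaled copy of $\qsmall$ in every bucket; by (b), $\p\in\cP_{\q}$. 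For the distance I would use~\cref{fact:dist}: for any nondecreasing $f\colon[\ab_0]\to[\numparts]$, the $j$-th cut realises a CDF value strictly inside bucket $j$ and, by (b), misses level $\tfrac{j-1+r^\ast}{\numparts}$ by at least $c_0/\numparts$; a suitable choice of the $\q$-profile forces these deviations to accumulate rather than telescope, giving $\totalvardist{\p^{\mathrm{no}}}{\cP_{\q}}\ge\dst_0$ for an absolute $\dst_0>0$. Finally, since the buckets are order-separated and carry equal masses under $\p$ and $\p^{\mathrm{no}}$, the multinomial allocation of samples to buckets has the same law under both, and conditioned on every bucket receiving at most $m$ samples the global ordered fingerprint is the bucket-ordered concatenation of per-bucket fingerprints, identically distributed under $\psmall$ versus $\qsmall$ by (a); so any order-based tester accepts $\p$ and $\p^{\mathrm{no}}$ with equal probability in that event. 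By the generalized birthday bound, $\ns$ i.i.d.\ samples into $\numparts$ equal buckets leave every bucket with $\le m$ samples except with probability $o(1)$ as long as $\ns=o(\numparts^{1-1/(m+1)})$, so no order-based tester, hence (by Stage~1) no tester at all, using that many samples can succeed. Taking $m=\Theta(\sqrt{\log\numparts})$ gives $\numparts^{1-1/(m+1)}=\numparts/2^{\Theta(\sqrt{\log\numparts})}$ and keeps $\ab_0=\numparts\cdot 2^{O(\sqrt{\log\numparts})}=\numparts^{1+o(1)}$ admissible, yielding the stated bound.
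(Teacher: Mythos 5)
Your three-stage skeleton matches the paper's, and Stages 1 and 3 (the Ramsey reduction to ordered fingerprints and the generalized-birthday argument over the $\numblk$ equal-mass blocks) are essentially the arguments the paper uses. The problem is Stage 2, which you yourself flag as ``the main obstacle'': you never actually establish the existence of the moment-matched pair, you only name two possible routes (a recursive doubling construction, a dimension count) without carrying either out. The paper's proof of this step (\cref{lem:two-dist:same-moments}) is a short counting-plus-pigeonhole argument: it considers the $\binom{\blksz}{\blksz/2}\ge 2^{\blksz}/\blksz$ distributions whose masses are $4/(5\blksz)$ or $6/(5\blksz)$ on a balanced split of $[\blksz]$, observes that the vector of all $m$-way moments of such a distribution takes at most $(6\blksz)^{m2^m}$ distinct values and that each string has at most $2^{\blksz/5}$ partial cyclic shifts, so with $\blksz=5m^22^m$ the pigeonhole principle yields two distributions with identical $m$-way moments that are far in the required combinatorial sense. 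This is the crux of the whole lower bound and is absent from your write-up.

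Second, the ``farness'' notion you chose --- a prefix of $\psmall$ hitting a level $r^\ast$ exactly while every partial sum of $\qsmall$ avoids a $c_0$-window around $r^\ast$ --- creates exactly the telescoping problem you mention and then wave away. In \cref{fact:dist} the summand for bin $j$ equals $\lvert (Q_j-S_j)-(Q_{j-1}-S_{j-1})\rvert$, where $Q_j$ and $S_j$ are the prefix sums of $\q$ and of the tested distribution at the $j$-th cut; if every cut overshoots its target level by the same amount of at least $c_0/\numparts$, the sum collapses to a boundary term and your distance lower bound fails. You assert that ``a suitable choice of the $\q$-profile forces these deviations to accumulate,'' but no such choice is exhibited. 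The paper sidesteps this entirely by using a different farness property (not being a $(99\blksz/100)$-partial cyclic shift): in \cref{claim:dist:eps0} each ``good'' block is shown to consume strictly more than $\blksz$ elements of the domain of the nondecreasing map $f$, so at most a $(1-1/2000)$ fraction of the $\numblk$ blocks can be good, and each remaining block contributes $\Omega(1/\numblk)$ to the distance with no possibility of cancellation across blocks. Until you either construct a $\q$-profile that provably prevents telescoping or switch to a shift-type farness notion backed by a counting proof, both the existence of the hard pair and the $\Omega(1)$ distance bound remain open in your argument.
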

\noindent At a high level, the proof of~\cref{theo:lower:bound:detailed} consists of three parts:
\begin{enumerate}[(1)]
	\item\label{item:proof:overview:1} First, we show that, for every $\ab$, there is an $N(\ab)$ such that, if there is a algorithm that succeeds in the data binning problem over $[N(\ab)]$ with $\ns$ samples, then there is an algorithm that succeeds in the data binning problem over $[\ab]$ with $\ns$ samples, such that this latter algorithm only considers the ``ordered fingerprint'' of the sample. This approach, analogous to that of~\cite{DiakonikolasKN15}, uses Ramsey theory, and enables us to restrict ourselves to (simpler to analyze) order-based algorithms. (\cref{prop:reduction:ordered:fingerprints:only})
	\item\label{item:proof:overview:2} Second, we show for every constant $m$, there exist $\blksz(m)$ and two distributions $\psmall$ and $\qsmall$ on $[\blksz]$ that (i)~are ``far from being cyclic shifts'' of one another (which will allow us to argue about distance to being a binning later), and (ii)~cannot be distinguished from ordered fingerprints from $\ns$ samples. The function $\blksz(m)$ is exponential in $m$. (\cref{lem:two-dist:same-moments})
	\item\label{item:proof:overview:3} Third, we consider the distributions $\p$ and $\q$ on $[\numparts]$ constructed such that the distribution restricted to each block of $\blksz(m)$ elements is $\psmall$ and $\qsmall$, respectively, in the natural way.  We show that (i)~$\pbig$ is ``far'' from any binning of $\qbig$, and (ii)~distinguishing these two distributions with ordered fingerprints requires seeing $m + 1$ samples from one particular block.  By folklore collision bounds, this implies a $\Omega(\numparts^{1-1/(m+1)})$ sample lower bound for distinguishing $\pbig$ and $\qbig$ from ordered fingerprints. (\cref{cor:lower:bound:kvk})
\end{enumerate}
Combining the three (applying~\ref{item:proof:overview:1} with $\ab$ set to $\numparts$, and $m$ in \ref{item:proof:overview:3}  set to $\sqrt{\log \numparts}$) then yields the theorem.

\subsection{From samples to ordered fingerprints}
The first step of our reduction will consist in ``hiding'' information from the algorithms. Roughly speaking, for the sake of our lower bound analysis we would like to argue that, without loss of generality, any testing algorithm can be assumed to only be given the \emph{order relations} between the samples from the unknown distribution, instead of the samples themselves. For instance, instead of seeing four samples $12,7,98,7$, we wish to restrict our analysis to algorithms which only see that $x_{(1)} = x_{(2)} < x_{(3)} < x_{(4)}$. Since we are drawing i.i.d. samples from the distribution, the order that the samples come in does not matter.

However, this simplifying assumption is not actually without loss of generality, and does not quite hold as stated: an algorithm \emph{can} sometimes infer more information from the values of the samples from $\p$ than from their ordered relations alone. We will prove a weaker statement, sufficient for our purposes; in order to do so, we start by introducing some notions formalizing the aforementioned ``order relations.''

\begin{definition}[Ordered fingerprints]
The \emph{ordered fingerprint} of a sequence of $\ns$ values in $[\ab]$ is an ordered frequency vector of the elements occurring in the sequence, with labels removed. Formally, for each element $i \in [\ab]$, let $\bF_i$ be the number of times element $i$ appears, and let $j_1<j_2<\ldots<j_t$ be the indices such that $\bF_{j_i}$ is positive (so that $t$ is the number of distinct elements in the sequence).  The ordered fingerprint is then the ordered collection of positive integers $(\bF_{j_1},\bF_{j_2},\ldots,\bF_{j_t})$ such that $\sum_{i=1}^t \bF_{j_i} = \ns$.
\end{definition}

To later argue about indistinguishability of our instances, we will also rely on the below notion of $\ns$-way moments (induced by an ordered fingerprint):
\begin{definition}[$s$-way moments]
Given a distribution $\p$ over $[\ab]$ and an ordered tuple of positive integers $F = (F_1,F_2,\ldots,F_t)$ such that $\sum_{i=1}^t F_i = \ns$, the probability that the ordered fingerprint on $\ns$ samples is $(F_1,F_2,\ldots,F_t)$ is given by
\begin{equation}\label{eq:m:way:moments}
\p^F \eqdef \dbinom{\ns}{F_1,F_2,\ldots,F_t} \sum_{1 \leq i_1 < i_2 < \cdots < i_t \leq \ab} \prod_{j=1}^t \p(i_j)^{F_{j}}
\end{equation}
We call such an expression an \emph{$\ns$-way moment of $\p$}.  Note that the $\ns$-way moments of $\p$ completely determine the distribution of the ordered fingerprint of $\ns$ random samples drawn from $\p$.   Thus, for two distributions $\p$ and $\q$, if $\p^F = \q^F$ for all $F$ such that $|F| = \ns$, then $\p$ and $\q$ cannot be distinguished from ordered fingerprints on $\ns$ samples.\footnote{Indeed, for every $s$, the $(s-1)$-way moments are linear combinations of $s$-way moments; we omit the details.}
\end{definition}
We now state and prove the following lemma, which captures the intuition discussed above and will be the first component of our lower bound:
\begin{proposition}
  \label{prop:reduction:ordered:fingerprints:only}
For every $\ab\geq 1$ and $\ns\geq 1$, there exists $N\geq 1$ such that the following holds for every $\numparts\geq 1$, reference distribution $\q$ over $[\numparts]$, and $\dst\in(0,1]$. If there exists an algorithm $\Algo_N$ that, for every distribution $\p'$ over $[N]$, solves the problem $\binidentityproblem(N,\q,\numparts,\dst)$ with $\ns$ samples from $\p'$, then there exists an algorithm $\Algo_\ab$ that, for every distribution $\p$ over $[\ab]$,  solves the problem $\binidentityproblem(\ab,\q,\numparts,\dst)$ with $\ns$ samples from $\p$. Moreover, $\Algo_\ab$ only considers the ordered fingerprint of the $\ns$ samples.
\end{proposition}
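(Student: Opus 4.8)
The plan is to use a Ramsey-theoretic argument in the spirit of Diakonikolas--Kananikolas--Nikishkin~\cite{DiakonikolasKN15} to find, inside the large domain $[N]$, a large subset on which the algorithm $\Algo_N$ behaves in a way that depends only on the \emph{order} of the samples, and then to simulate $\Algo_N$ on that subset to solve the problem over $[\ab]$. Concretely, for a distribution $\p$ over $[\ab]$, the algorithm $\Algo_\ab$ will pick an increasing embedding $\iota\colon[\ab]\hookrightarrow[N]$, push $\p$ forward along $\iota$ to get $\p'=\iota_\ast\p$ over $[N]$, and run $\Algo_N$ on samples from $\p'$ (which $\Algo_\ab$ can generate from samples of $\p$ since it knows $\iota$). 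Crucially, because $\iota$ is order-preserving, $\p'\in\cP_\q$ (over $[N]$) if and only if $\p\in\cP_\q$ (over $[\ab]$), and the distances to the respective properties are equal: any binning of $[\ab]$ into $\numparts$ intervals pulls back/pushes forward to a binning of $[N]$ with the same bin masses, and conversely (intervals of $[N]$ intersected with $\iota([\ab])$ are again intervals, and empty bins are allowed). So $\binidentityproblem(N,\q,\numparts,\dst)$ on $\p'$ has the same yes/no answer as $\binidentityproblem(\ab,\q,\numparts,\dst)$ on $\p$, and $\Algo_\ab$ inherits correctness from $\Algo_N$.

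The point of taking $N$ huge is to force the embedded copy to be ``order-indistinguishable.'' View $\Algo_N$ as a (randomized) function of the multiset of $\ns$ samples it receives. Color each increasing $\ns$-tuple (equivalently, each $\ns$-element subset) of $[N]$ by the full behavior of $\Algo_N$ on that set of sample values: since samples may repeat, we actually need to color $\ns$-element subsets by the vector, indexed over all ways the $\ns$ sample slots collapse onto the chosen set of distinct values (i.e.\ over all ordered set partitions of the $\ns$ slots into $t$ blocks, for each $t\le \ns$), of the resulting accept probabilities of $\Algo_N$ -- a finite object once we discretize these probabilities to sufficiently fine granularity. Applying the hypergraph Ramsey theorem for $\ns$-uniform hypergraphs with this finite number of colors, and taking $N=N(\ab,\ns)$ large enough, we obtain a monochromatic subset $S\subseteq[N]$ with $|S|\ge\ab$; identify $\iota$ with the increasing bijection $[\ab]\to S$. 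On $S$, the accept probability of $\Algo_N$ on any concrete sample depends only on the ordered fingerprint of that sample (the pattern of collisions among the $\ns$ slots), not on which elements of $S$ were hit -- which is exactly the statement that $\Algo_\ab$, after embedding, is an order-based algorithm. One subtlety: the approximate coloring means ``monochromatic'' only gives accept probabilities equal up to a small additive slack, but this can be absorbed by running $\Algo_N$ a constant number of times and taking a majority vote, or simply by using finer discretization and noting the slack affects neither completeness nor soundness by more than an arbitrarily small constant; the cleanest route is to first boost $\Algo_N$'s success probability to, say, $1-1/100$, then discretize to granularity $1/100$, so that monochromatic tuples all have accept probability within $2/100$ of each other and the majority behavior is unambiguous.

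The main obstacle is purely bookkeeping rather than conceptual: setting up the coloring correctly so that it simultaneously (a) is a finite coloring of a fixed-uniformity hypergraph, and (b) captures \emph{all} the information $\Algo_N$ could extract from sample values, including the collision structure. Handling repeated samples is the part that needs care -- a sample of $\ns$ points from $[N]$ need not be a set, so one cannot naively color $\ns$-sets; the fix is to observe that the behavior of $\Algo_N$ on any $\ns$-sample supported on a given $t$-element set $T$ is determined by $T$ together with a combinatorial ``shape'' (an ordered partition of $\ns$ slots, recording which slots landed on the $i$-th smallest element of $T$), and to fold all of these shapes, over all $t\le\ns$ and all subsets $T$ of the $\ns$-set being colored, into one coordinate of the color vector. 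Once the coloring is defined, Ramsey gives the monochromatic set, and the reduction argument of the previous paragraph finishes the proof. I would present the distance-preservation observation as a short lemma (it is essentially immediate from the definition of $\cP_\q$ and $\dist{\cdot}{\cdot}$, using that order-preserving maps send interval partitions to interval partitions and allow empty bins), then state the Ramsey setup, then conclude.
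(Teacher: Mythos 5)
Your proposal follows the same strategy as the paper: use a (hypergraph) Ramsey argument in the style of Diakonikolas--Kane--Nikishkin to find a large subset of $[N]$ on which $\Algo_N$'s behavior depends only on the ordered fingerprint, then simulate via an order-preserving embedding $\iota\colon[\ab]\hookrightarrow S$. Your distance-preservation observation (increasing embeddings carry interval partitions both ways and preserve $\totalvardist{\cdot}{\cP_\q}$, since empty bins are allowed) matches the paper's, and discretizing accept probabilities to handle randomized $\Algo_N$ is a fine alternative to the paper's ``assume $\Algo_N$ deterministic once the set of distinct samples is fixed.''

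The one place where you diverge is precisely the place the paper itself flags as the nontrivial departure from~\cite{DiakonikolasKN15}: handling repeated samples. The paper applies \cref{lem:ramsey} \emph{$\ns$ times sequentially} — first with uniformity $a=\ns$ to get $S_\ns$ on which the all-distinct case is consistent, then $a=\ns-1$ inside $S_\ns$, and so on down to $S_1$ — so that at each stage the coloring lives directly on $t$-subsets and ``monochromatic'' immediately gives shape-only dependence for all $t$-subsets. You instead fold the behavior on every sub-subset and shape into one coordinate of a single coloring of $\ns$-subsets and apply Ramsey once. That works, but it has a subtlety you do not address: monochromaticity gives, for each position-set $A\subseteq[\ns]$ and shape $\sigma$, a common behavior across all $\ns$-subsets of $S$ for the sub-subset sitting at positions $A$; it does \emph{not} directly say that two arbitrary $t$-subsets $T'_1,T'_2\subseteq S$ — which may occupy different positions in any $\ns$-subset containing them — induce the same behavior. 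This is repaired by taking the monochromatic set of size $|S|\ge\ab+\ns$ and embedding $[\ab]$ into its first $\ab$ elements, so that any $t$-subset $T'$ of $\iota([\ab])$ can be extended to an $\ns$-subset of $S$ in which $T'$ occupies positions $\{1,\dots,t\}$, making all comparisons go through a fixed position-set. You should state this (your $|S|\ge\ab$ is not quite enough). With that repair the argument is correct and yields the same conclusion as the paper's iterated-Ramsey version.
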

\begin{proof}
The argument is similar to that of Diakonikolas, Kane, and Nikishkin~\cite{DiakonikolasKN15}, and relies on a result of Conlon, Fox, and Sudakov:
\begin{lemma}[{\cite{ConlonFS10}}]
\label{lem:ramsey}  Given a set $S$ and an integer $t$, let $\binom{S}{t}$ denote the set of subsets of cardinality $t$.  For all positive integers, $a$, $b$, and $c$, there exists a positive integer $N$ so that for any function $\chi\colon\binom{[N]}{a} \to [b]$, there exists $S \subseteq [N]$ with $|S| = c$ such that $\chi$ is constant on $\binom{S}{a}$.
\end{lemma}
As in the proof of~\cite[Theorem~13]{DiakonikolasKN15}, given $\ab,\ns$, we will invoke~\cref{lem:ramsey} to obtain a new domain size $N$, in a way detailed below: loosely speaking, $a$ corresponds to the number of samples $\ns$, $b$ to the number of possible decision (binary) functions from a given set of $\ns$ samples which only depend on the ordered fingerprint of those samples, and $c$ to the target domain size $\ns$. We will use this to, given $\Algo_N$ (which induces a mapping $\chi$ from sets of $\ns$ samples to such order-based decision functions), find a set $S \subseteq [N]$ of size $\ab$, and a monotone function $f\colon [\ab] \to [N]$ such that $f([\ab])=S$. The algorithm $\Algo_\ns$ then runs the promised algorithm $\Algo_N$ on the distribution $f(\p)$ obtained by applying $f$ to the $\ns$ samples. Since $f$ is order-preserving, the distances will not change.  Since we can assume that $\Algo_N$ is deterministic once we know which distinct samples we get from $[\ab]$, the fact that the induced $\chi$ is constant on $\binom{f([\ab])}{\ns}$ guarantees the output will be a function of the ordered fingerprint only.

However, we have an issue that~\cite{DiakonikolasKN15} does not.  When we draw a total of $\ns$ samples, we likely will not get $\ns$ distinct samples. \cite{DiakonikolasKN15} gets around this issue by ``dividing'' all elements of $[\ab]$ into sub-elements, and upon seeing an element in the sample, assign it to a uniformly random one of its sub-elements.  However, our main parameter of interest $\numparts$ differs from theirs: if we applied this procedure to our problem, the bounds obtained would deteriorate.  That is, while the $\mathcal{A}_\numparts$-distance (the focus of~\cite{DiakonikolasKN15}) would not change, the coarsening distance we consider here would.

We handle this issue as follows.  Given an integer $\ns$, we will color all nonempty subsets of $[N]$ with \emph{at most} $\ns$ elements.  This color associates to a set of $\ns$ samples the function that $\Algo_N$ uses to \accept or \reject given the ordered fingerprint that accompanies these samples.  Now, we apply~\cref{lem:ramsey} repeatedly.  In our first application, we find a subset $S_\ns \subseteq [N]$ such that, conditioning on distinct samples, $\Algo_N$ has consistent behavior.\footnote{In this case, the ordered fingerprint is the all ones vector of length $\ns$, so the functions are not very interesting.}{}  Now we remove all subsets that contain elements outside of $S_\ns$, and we apply~\cref{lem:ramsey} again to find a set $S_{\ns-1} \subseteq S_s$ such that $\Algo_N$'s behavior only depends on the ordered fingerprint given that in the $\ns$ samples, there are at least $\ns-1$ distinct values seen.
Continuing in this fashion, after applying~\cref{lem:ramsey} a total of $\ns$ times, we arrive at a set $S_1$ (with $S_1\subseteq S_2\subseteq \dots\subseteq S_\ns$) such that, conditioned on samples coming only from $S_1$, $\Algo_N$ depends only on the ordered fingerprint of the samples seen.
In every application of the lemma, we have $a \leq \ns$, and since there are $2^\ns$ possible ordered fingerprints given $\ns$ samples, we have $b \leq 2^{2^\ns}$.  We set $c = \ab$ in our final application.
\end{proof}

\subsection{Sequences far from being shifts of each other}
As a building block from our lower bound construction, we will require the existence of (non-negative) sequences which are ``far'' from each other, in that any circular shift of the first remains far, in Hamming distance, from the second.

\begin{definition}[Partial shifts]
Given an alphabet $\Sigma$ and two strings $x$ and $y$ in $\Sigma^\ab$, we say that $y$ is an \emph{$r$-partial cyclic shift of $x$} if there is an integer $\ell$ and a nondecreasing function $f\colon [\ab] \to [\ab]$ such that, for at least $r$ indices $i$ of $[\ab]$, $x_i = y_{(f(i)+\ell) \bmod n}$.  That is, thinking of the symbols of $y$ placed in a circle, $y$ contains a substring $r$ of $x$ (without wrapping around more than once).

For two distributions $\p$ and $\q$ over $[\ab]$, we then say that \emph{$\q$ is an $r$-partial cyclic shift of $\p$} if the string $\q(1)\q(2)\cdots\q(\ab)$ is an $r$-partial cyclic shift of $\p(1)\p(2)\cdots\p(\ab)$.
\end{definition}
We observe that finding a witness to $x$ and $y$ being $r$-partial cyclic shifts of each other is equivalent to finding the \emph{cyclic longest common subsequence} between $x$ and $y$, which is a problem with applications in DNA sequencing~\cite{Nguyen12}. 
With these notions in hand, we are ready to prove the key technical lemma underlying part~\ref{item:proof:overview:2} of our argument:
\begin{lemma}
\label{lem:two-dist:same-moments}
	Let $m > 4$ be a positive integer, and $\blksz = 5m^2 2^m$. Then there exist two distributions $\psmall$ and $\qsmall$ over $[\blksz]$ such that (i)~$\psmall$ and $\qsmall$ are not $(99b/100)$-partial cyclic shifts of each other, and (ii)~$\psmall$ and $\qsmall$ agree on all $m$-way moments.
\end{lemma}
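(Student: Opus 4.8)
To a distribution $\nu$ on $[\blksz]$ associate the element $\Phi(\nu)=\prod_{i=1}^{\blksz}\bigl(1+\sum_{k=1}^{m}\nu(i)^{k}a_k\bigr)$ of the free algebra $\R\langle a_1,\dots,a_m\rangle$, graded by $\mathrm{wt}(a_k)=k$ and quotiented by all monomials of weight $>m$, the product taken in increasing order of~$i$. Expanding, the coefficient of the word $a_{k_1}\cdots a_{k_t}$ in $\Phi(\nu)$ equals $\sum_{i_1<\dots<i_t}\prod_{j}\nu(i_j)^{k_j}$, which — after inserting the multinomial factor of~\eqref{eq:m:way:moments} — is exactly the $(k_1,\dots,k_t)$-way moment of $\nu$, and every moment of order $\sum_j k_j\le m$ arises this way. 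Hence $\psmall$ and $\qsmall$ agree on all $m$-way moments iff $\Phi(\psmall)=\Phi(\qsmall)$. So it suffices to construct two strings of nonnegative reals that (a)~are rearrangements of one another, which already forces the weight-$1$ part of $\Phi$ to agree — they then share one multiset of values, and dividing both by the common sum yields genuine distributions — and (b)~have the \emph{same $m$-deck}: for every string $u$ of length $\le m$ over the common value alphabet, the number of scattered occurrences of $u$ agrees in the two strings. Condition (b) forces $\Phi(\psmall)=\Phi(\qsmall)$ whatever values are used, via the multiplicative rule $N(u;xy)=\sum_{u=u_1u_2}N(u_1;x)N(u_2;y)$ and the fact that a scattered occurrence of a length-$\le m$ word touches at most $m$ positions.

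\textbf{Construction.} The standard source of same-$m$-deck pairs is the Prouhet/Thue--Morse word: iterating $(\sigma,\sigma')\mapsto(\sigma\sigma',\sigma'\sigma)$ from the seed $(0,1)$ produces, after $m$ steps, two distinct length-$2^m$ binary words $\mathbf t_m,\bar{\mathbf t}_m$ that one checks have the same $m$-deck by induction on $m$ — the step amounts to the two ``half'' generating functions commuting modulo weight $>m$, which needs a carefully strengthened inductive hypothesis. This pair alone is useless for property~(i), though: from $\mathbf t_m=\mathbf t_{m-1}\bar{\mathbf t}_{m-1}$ one sees that rotating $\mathbf t_m$ by $2^{m-1}$ gives $\bar{\mathbf t}_{m-1}\mathbf t_{m-1}=\bar{\mathbf t}_m$, so $\mathbf t_m$ \emph{is} a cyclic shift of $\bar{\mathbf t}_m$. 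To destroy this symmetry I would concatenate $5m^2$ copies of the gadget carried by \emph{pairwise disjoint} two-letter alphabets, $\psmall\propto G^{(1)}G^{(2)}\cdots G^{(5m^2)}$ and $\qsmall\propto \bar G^{(1)}\bar G^{(2)}\cdots\bar G^{(5m^2)}$, of total length $\blksz=5m^{2}2^{m}$, where $G^{(i)},\bar G^{(i)}$ is a relabelled copy of $\mathbf t_m,\bar{\mathbf t}_m$. Disjointness of the alphabets earns its keep twice: it preserves the $m$-deck (block-wise deck equality lifts to the concatenation by the split rule, since a scattered occurrence of a length-$\le m$ word decomposes into pieces of length $\le m$ each confined to a single block), and it will forbid cross-block matchings.

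\textbf{Distance and conclusion.} Part (ii) is the $\Phi$-identity. For (i): fix any cyclic rotation of $\qsmall$ and any nondecreasing $f$. Since block $G^{(i)}$ has its own alphabet, a coordinate of $G^{(i)}$ can only be matched inside the (rotated, possibly split) occurrence of $\bar G^{(i)}$, and monotonicity of $f$ prevents the matching from ``wrapping'' across the reordered blocks; a short case analysis then caps the number of matched coordinates by $\sum_{i}\mathrm{LCS}\bigl(G^{(i)},\ \text{rotated }\bar G^{(i)}\bigr)$ plus one boundary block. This is below $\tfrac{99}{100}\blksz$ as soon as $\mathbf t_m$ versus any rotation of $\bar{\mathbf t}_m$ has longest common subsequence at most $\bigl(1-\bigOmega{1/m^2}\bigr)2^m$ — precisely the slack bought by taking $\bigTheta{m^2}$ gadgets (and should this fail for the Thue--Morse gadget itself, one replaces it by a same-$m$-deck pair with provably smaller self-rotation overlap). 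Normalising the two strings by their common value-sum then gives the desired $\psmall,\qsmall$ on $[\blksz]$.

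\textbf{Main obstacle.} The crux is the tension just seen: ``same $m$-deck'' is a strong symmetry constraint, and its most natural witnesses — a word against its complement or reverse, e.g.\ $\mathbf t_m$ against $\bar{\mathbf t}_m$ — are cyclic shifts of one another, the exact opposite of what (i) wants. Reconciling the two is what drives the disjoint-alphabet, $\bigTheta{m^2}$-fold amplification up to $\blksz=5m^{2}2^{m}$. The remaining technical work is then (a)~showing that the amplified object still has \emph{exact} $m$-deck equality — the ``commutator vanishes modulo weight $>m$'' induction, where isolating the right strengthened invariant is the real difficulty — and (b)~the uniform-over-all-rotations longest-common-subsequence bound that powers the distance estimate.
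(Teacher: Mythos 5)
Your approach is genuinely different from the paper's: the paper proves the lemma non‑constructively, by a pigeonhole argument over the $\binom{\blksz}{\blksz/2}$ strings in $\{2,3\}^{\blksz}$ (the number of tuples of $m$-way moment values is at most $(6\blksz)^{m2^m}$, while a counting bound shows any fixed string has at most $2^{\blksz/5}$ partial cyclic shifts, so with $\blksz = 5m^22^m$ one can extract two moment-matching strings far from being partial cyclic shifts). You instead propose an explicit construction. Your reformulation of ``same $m$-way moments'' as equality of the truncated noncommutative generating function $\Phi$, and hence as $m$-deck (i.e.\ $m$-binomial) equivalence of the value strings, is correct and clean; the observation that block‑disjoint alphabets make $m$-deck equality lift to the concatenation via the split rule is also correct.

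The distance half, however, has a genuine gap, and it is not the one you flag. The definition of $r$-partial cyclic shift uses a \emph{nondecreasing} (not strictly increasing) $f$, so the per‑block matching count is not the LCS of $G^{(i)}$ against $\bar G^{(i)}$: it is the maximum over nondecreasing $g\colon[2^m]\to[2^m]$ of $\#\{p: \mathbf t_m[p]=\bar{\mathbf t}_m[g(p)]\}$, and this can be much larger than the LCS because $g$ may repeat values. Concretely, since $\mathbf t_m$ starts with a single $0$, deleting that first symbol leaves a string whose run pattern $1010\cdots$ (length one less than the number of runs of $\mathbf t_m$) is a prefix of the alternating run pattern of $\bar{\mathbf t}_m$. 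Mapping the $j$-th run of $\mathbf t_m[2..2^m]$ onto (the first position of) the $j$-th run of $\bar{\mathbf t}_m$ yields a nondecreasing $g$ matching all $2^m-1$ positions $p\geq 2$. Applying this $g$ in every block and taking $\ell=0$ gives a globally nondecreasing $f$ (across the block boundary $f$ jumps by at least $1$, which is allowed) that matches $(2^m-1)\cdot 5m^2 = \blksz\bigl(1-2^{-m}\bigr)$ positions. For $m\geq 7$ this exceeds $99\blksz/100$, so your $\psmall$ and $\qsmall$ \emph{are} $(99\blksz/100)$-partial cyclic shifts, and property~(i) fails. The $\Theta(m^2)$-fold amplification does not help, because the bad alignment is block-local and monotone, so it composes freely across blocks with no budget cost. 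Your fallback---replace the Thue–Morse gadget by ``a same-$m$-deck pair with provably smaller self-rotation overlap''---is exactly the existence statement the lemma is asserting, so invoking it as a black box is circular; one would have to produce such a pair, and that is where the paper's pigeonhole argument does the actual work.

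A smaller point: even for a strictly-increasing alignment (LCS), your bound invokes $\mathrm{LCS}(\mathbf t_m,\text{any rotation of }\bar{\mathbf t}_m)\leq(1-\Omega(1/m^2))2^m$, which cannot hold since $\bar{\mathbf t}_m$ rotated by $2^{m-1}$ \emph{is} $\mathbf t_m$, giving LCS $=2^m$. (In the block-disjoint picture the relevant per-block quantity for $\ell$ away from the wrap is actually $\mathrm{LCS}(\mathbf t_m,\bar{\mathbf t}_m)$ against the \emph{unrotated} $\bar{\mathbf t}_m$, since a global rotation of $\qsmall$ translates each block but does not internally rotate it; but the non-injectivity issue above dominates in any case.)
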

\begin{proof}
	Let $\p$ be a probability distribution over $[\blksz]$ such that its probability mass function takes the values $2/(\frac52b)$ and $3/(\frac52\blksz)$ each on $\blksz/2$ of the elements of $[\blksz]$.  Each $m$-way moment $\p^F$ can be expressed as a homogeneous polynomial in the $\blksz$ values $\p(i)$ of total degree $m$, such that each coefficient is $0$ or $K_F \eqdef \binom{s}{F_1,F_2,\ldots,F_t}$; note that $K_F$ is independent of $\p$.  The number of terms of such a polynomial is at most $\binom{\blksz+m-1}{m} \leq (\blksz+m-1)^m \leq (2\blksz)^m$.  Thus, every $m$-way moment of $\p$ evaluates to an expression of the form $qK_F/(\frac52\blksz)^m$, where $q$ is a integer such that $0 \leq q \leq (6\blksz)^m$.  Since there are $2^m$ $m$-way moments of $\p$, there are at most $(6\blksz)^{m2^m}$ possible values of the $m$-way moments of $\p$.
		
	For the claim about cyclic shifts, we use a counting argument.  We will show that for any string $x \in \{2,3\}^\blksz$, there are at most $2^{\blksz/5}$ other strings that are $(99\blksz/100)$-partial cyclic shifts of $x$.  Indeed, any $(99\blksz/100)$-partial cyclic shift $y$ of $x$ can be constructed by the following process:  first, select $99\blksz/100$ positions of $x$ that will be present in $y$--these positions will form a witness.  Next, select $99\blksz/100$ positions of $y$ that these bits of $x$ will appear in, and a cyclic shift for these bits.  Finally, select any values in $\{2,3\}$ for the remaining $\blksz/100$ positions of $y$.  Thus, the number of possible $y$'s is at most
	\[
	\binom{\blksz}{\blksz/100}^2 (99\blksz/100) 2^{\blksz/100} \leq 2^{2h(1/100)} 2^{\blksz/50} \leq 2^{\blksz/5} 
	\]
	where we have used the bound $\binom{n}{\alpha n} \leq 2^{nh(\alpha)}$, where $h(\alpha) = \alpha \log \alpha + (1 - \alpha) \log (1 - \alpha)$ is the binary entropy.
	Since there are a total of $\binom{\blksz}{\blksz/2} \geq 2^\blksz/\blksz$ strings of length $\blksz$ with an equal number of $2$'s and $3$'s, then for every $\blksz$, we can find a set $S$ of $2^{4\blksz/5}/\blksz$ strings of length $\blksz$ such that no two of them are $(99\blksz/100)$-partial cyclic shifts of each other.  We map strings over $\{2,3\}^\blksz$ to distributions over $[\blksz]$ in a natural way:  given $x \in S$, we define the corresponding distribution $\p$ such that $\p(i) = x_i/(\frac52\blksz)$.  Thus, we have $2^{4\blksz/5}/\blksz$ distributions over $[\blksz]$, no two of which are $(99\blksz/100)$-partial cyclic shifts of each other. 
	Setting $\blksz = 5m^2 2^m$, we then have
	\begin{align*}
	\frac{1}{\blksz} 2^{4\blksz/5} 
	&= \frac{2^{4m^2 2^m}}{5m^22^m} 
	\geq 2^{3m^22^m} 
	=(2^{3m})^{m2^m} 
	> (30m^2 2^m)^{m2^m} 
	= (6\blksz)^{m2^m}.
	\end{align*}
	Thus, by the pigeonhole principle there exist two distributions $\psmall$ and $\qsmall$ over $[\blksz]$ that are not $(99\blksz/100)$-partial cyclic shifts of each other, yet such that $\psmall$ and $\qsmall$ agree on all $m$-way moments.
\end{proof} 

\subsection{Constructing the hard-to-distinguish instances over $[\ab]$}
For a fixed integer $m > 1$ (to be determined later) and sufficiently large $\numparts$, let $\psmall$ and $\qsmall$ be the two distributions over $[\blksz]$ promised by~\cref{lem:two-dist:same-moments}, where $\blksz = 5m^2 2^m$; and let $\numblk \eqdef \numparts/\blksz$.  We will define $\pbig$ and $\qbig$ to be distributions over $[\blksz \numblk]$ in the following way:  We partition $[\blksz\numblk]$ into blocks $B_1,B_2,\ldots,B_{\numblk}$, where $B_j=\{\blksz(j-1)+1,\blksz(j-1)+2,\ldots,\blksz j\}$.  Now $\pbig$ (resp. $\qbig$) is the distribution resulting from the following process: 
\begin{itemize}
\item Pick a uniformly random element $i \in [\numblk]$ and a random sample $j$ from $\psmall$ (resp. $\qsmall$).
\item Output, as resulting sample from $\pbig$ (resp. $\qbig$), the element $(i-1)\numblk+j$.
\end{itemize}
Since $\psmall(i)$ and $\qsmall(i)$ are in $\left\{4/(5\blksz),6/(5\blksz)\right\}$ for $i \in [\blksz]$, we have that $\pbig(i)$ and $\qbig(i)$ are always in $\left\{4/(5\blksz \numblk),6/(5\blksz \numblk)\right\}$ for all $i \in [\blksz \numblk]$.

We will prove that $\totalvardist{\pbig}{\cP_{\qbig}}$ is large.  To facilitate this, recalling~\cref{fact:dist} we will lower bound the distance of $\pbig$ to a binning of $\qbig$ in the following way:
\begin{equation}\label{eq:dist}
	\totalvardist{\pbig}{\cP_{\qbig}} \geq \frac{1}{2} \dist{\pbig}{\qbig} = \frac{1}{2}\min_f \sum_{i=1}^{\blksz \numblk} \big| \qbig(i) - \sum_{t: f(t) = i} \pbig(t) \big|,
\end{equation}
where the minimum is taken over all nondecreasing functions $f\colon [\blksz \numblk] \to [\blksz \numblk]$.  To use the language of a partition\footnote{Recall that the sets in the partition are allowed to be empty.}, the interval $I_i$ is simply $f^{-1}(i)$. 
We show the following:
\begin{claim}
  \label{claim:dist:eps0}
	There exists $\dst_0>0$ such that, for $\pbig$ and $\qbig$ as described above, $\totalvardist{\pbig}{\cP_{\qbig}} \geq \dst_0$. (Moreover, one can take $\dst_0 = 10^{-7}$.)
\end{claim}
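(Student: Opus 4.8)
The plan is to lower-bound $\dist{\pbig}{\qbig}$ directly, using the structure of the construction together with property (i) of \cref{lem:two-dist:same-moments} (that $\psmall$ and $\qsmall$ are not $(99\blksz/100)$-partial cyclic shifts of each other), and then invoke \eqref{eq:dist}. Fix any nondecreasing $f\colon [\blksz\numblk]\to[\blksz\numblk]$, inducing intervals $I_i = f^{-1}(i)$. The key observation is that because $f$ is monotone and $\qbig$ has the product-like block structure (each block $B_j$ carrying exactly $1/\numblk$ of the mass, distributed according to $\qsmall$), the preimages $f^{-1}(B_j)$ of the $\qbig$-blocks are themselves consecutive intervals of $[\blksz\numblk]$ that partition the domain; call them $J_1,\dots,J_{\numblk}$. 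So I would first argue a ``pigeonhole on blocks'' step: since $\pbig$ also assigns exactly $1/\numblk$ mass to each of its blocks $B_1,\dots,B_{\numblk}$, and the $J_\ell$'s partition the line into $\numblk$ intervals, at least a constant fraction of the indices $\ell$ must have the property that $J_\ell$ is ``essentially aligned'' with a single $\pbig$-block $B_{\pi(\ell)}$ — i.e. $J_\ell$ overlaps $B_{\pi(\ell)}$ in all but a small constant fraction of that block's mass — because the total mass that can be ``spread across block boundaries'' is bounded. Quantitatively, if too many $J_\ell$'s were misaligned, the cumulative distribution functions of $\pbig$ and $\qbig$ would have to disagree too much; alternatively, count boundary-straddling: there are only $\numblk-1$ boundaries among the $J_\ell$'s and $\numblk-1$ among the $B_j$'s, so only $O(1)$-fraction of blocks can be badly split.

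For each well-aligned pair $(J_\ell, B_{\pi(\ell)})$, the restriction of $f$ to $J_\ell$ maps onto $B_{\ell}$ (a block of $\blksz$ consecutive target elements), and since $f$ is nondecreasing this restricted map, after translating both blocks to $[\blksz]$, is exactly a nondecreasing map $[\blksz]\to[\blksz]$ of the kind appearing in the definition of an $r$-partial cyclic shift (with shift $\ell=0$, so actually just an ordinary monotone alignment, which is even more restrictive). The mass $\pbig$ places on $J_\ell$ that sits inside $B_{\pi(\ell)}$, binned via $f$, is therefore a monotone binning of (a $1-o(1)$ fraction of) $\psmall$ into the $\blksz$ cells of $\qsmall$. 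By property (i) of \cref{lem:two-dist:same-moments}, $\psmall$ and $\qsmall$ are not $(99\blksz/100)$-partial cyclic shifts, which (after checking that a monotone binning into singletons or empty sets that matches $\qsmall$ on all $\blksz$ coordinates would exhibit such a partial shift) forces $\sum_{i}\big|\qsmall(i)-\sum_{t:\bar f(t)=i}\psmall(t)\big| \geq c$ for some absolute constant $c>0$ — here I would convert the Hamming-distance statement ``not a $(99b/100)$-partial shift'' into an $\ell_1$ statement using the fact that all entries of $\psmall,\qsmall$ lie in $\{4/(5\blksz),6/(5\blksz)\}$, so each position of disagreement contributes $\Omega(1/\blksz)$ and there are $\Omega(\blksz)$ disagreements, giving $\Omega(1)$ total. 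Summing this constant contribution over the $\Omega(\numblk)$ well-aligned blocks, and dividing by the $1/\numblk$ per-block mass scaling, yields $\dist{\pbig}{\qbig} \geq \dst_0'$ for an absolute constant; then \eqref{eq:dist} gives $\totalvardist{\pbig}{\cP_{\qbig}} \geq \dst_0'/2$, and tracking constants gives the claimed $10^{-7}$.

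The main obstacle I expect is the alignment step: carefully showing that a monotone $f$ cannot "reshuffle" the blocks, and quantifying how much $\pbig$-mass can leak across $J_\ell$-boundaries, so that a constant fraction of blocks are aligned well enough that the partial-cyclic-shift lower bound on $(\psmall,\qsmall)$ applies with room to spare. The subtle point is that $f$ need not respect block boundaries at all a priori — it could in principle merge several $\pbig$-blocks into one $J_\ell$ and split others — so I need to argue that such behavior is globally mass-expensive (again via a CDF / counting-of-boundaries argument) before I can localize to individual well-behaved blocks and apply \cref{lem:two-dist:same-moments}. Getting the constants to close (so that "$99/100$" partial-shift threshold still leaves enough disagreement after discarding the $o(1)$ leaked mass and the misaligned blocks) will require some slack, which is why $m>4$ and the generous $\blksz = 5m^2 2^m$ are chosen; I would verify that the $1/100$ and $1/5$ constants in \cref{lem:two-dist:same-moments} propagate to something comfortably larger than $10^{-7}$.
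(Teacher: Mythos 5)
Your proposal shares the paper's two key ingredients (the partial-cyclic-shift lemma and the monotonicity of $f$), but the route you take through them has a genuine gap, precisely at the ``alignment'' step you yourself flag as the main obstacle.

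The issue is that the claim ``a constant fraction of the $J_\ell$'s must be essentially aligned with a single $\pbig$-block'' is not true, and neither of the two heuristics you offer for it works. Consider the nondecreasing $f$ that shifts everything by $\blksz/2$: then every $J_\ell$ straddles two $\pbig$-blocks exactly in half, so \emph{no} $J_\ell$ is essentially aligned with a single $B_{\pi(\ell)}$, yet your boundary-counting heuristic (``only $\numblk-1$ boundaries, so only $O(1)$ fraction badly split'') and your CDF heuristic (the CDFs of $\pbig$ and $\qbig$ actually agree at all block boundaries, since both place mass exactly $j/\numblk$ on the first $j$ blocks) would both suggest almost-full alignment. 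A single global shift misaligns all blocks simultaneously; a counting-of-boundaries argument cannot rule this out. Moreover, even granting some notion of alignment, your argument only extracts a contribution from the aligned blocks and says nothing about the misaligned ones, so if alignment fails you get no bound at all.

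The paper sidesteps alignment entirely, and in a way worth internalizing. It fixes the partition into target blocks $B_j$, defines $d(B_j)$ as the local contribution to $\dist{\pbig}{\qbig}$, and calls $B_j$ \emph{good} if $d(B_j)\le 1/(2500\numblk)$. A good block must have at least $999\blksz/1000$ elements $g$ with a unique preimage $t$ satisfying $\pbig(t)=\qbig(g)$ (because any element without such a preimage contributes $\ge \frac{2}{5\blksz\numblk}$, by the $\{4/(5\blksz\numblk),6/(5\blksz\numblk)\}$ structure of the masses). The ``not a $(99\blksz/100)$-partial cyclic shift'' hypothesis then forces any $\blksz$-window of those preimages to hit at most $99\blksz/100$ of the good elements, so the preimages of a good block span at least $(1+1/1000)\blksz$ consecutive domain elements. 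Since $f$ is nondecreasing on $[\blksz\numblk]$, this spanning budget limits the number of good blocks to $(1-1/2000)\numblk$; the remaining $\numblk/2000$ bad blocks each contribute $\ge 1/(2500\numblk)$, giving $\dist{\pbig}{\qbig}\ge 1/(5\cdot 10^6)$ and hence, via \cref{fact:dist}, $\totalvardist{\pbig}{\cP_{\qbig}}\ge 10^{-7}$. Note this argument makes no attempt to match $J_\ell$'s with $B_j$'s; it works block-by-block on the \emph{target} side and uses monotonicity only through the total-length budget of preimages. Your approach, as sketched, does not recover this, and without a correct replacement for the alignment step the proof does not go through.
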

\begin{proof}
We will show that $\dist{\pbig}{\qbig} \geq 1/(5\cdot 10^6)$. We first rewrite the distance as 
$
\dist{\pbig}{\qbig} = \min_f \sum_{j=1}^{\numblk} d(B_j),
$ 	
where, using the same blocks as described earlier, we define 
\[
d(B_j) \eqdef \sum_{i \in B_j} \big| \qbig(i) - \sum_{t: f(t) = i} \pbig(t) \big|
\]
 (suppressing its dependence on $\pbig,\qbig$, and $f$).  We will call a block $B_j$ \emph{good} if $d(B_j) \leq 1/(2500\numblk)$.  Note that $f$ need not be a function that maps elements of $B_j$ to other elements of $B_j$.  Since $\pbig(i)$ and $\qbig(i)$ are always in $\left\{\frac4{5\blksz \numblk},\frac6{5\blksz \numblk}\right\}$, any element $i \in B_j$ such that $|f^{-1}(i)| \neq 1$ will contribute at least $\frac2{5\blksz \numblk}$ to $d(B_j)$.  Thus, we call an element $i$ \emph{good} if there is a unique $t$ such that $f(t) = i$ and $\qbig(i) = \pbig(t)$.  Since each element that is not good contributes $\frac2{5\blksz \numblk}$ to $d(B_j)$, it follows that, in a good block, at most $1/(2500 \numblk)/(\frac2{5\blksz \numblk}) = \blksz/1000$ elements are not good, and therefore a good block must contain at least $999\blksz/1000$ good elements.

Let $B_j$ be a good block: it must be the case that for at least $999\blksz/1000$ values of $g \in B_j$, there exists a unique $t$ such that $f(t) = g$.  Let $g_1 < g_2 < \ldots < g_v$ be the good elements in $B_j$ (where $v \geq 999\blksz/1000$), and let $t_1,t_2,\ldots,t_v$ be such that $f(t_i) = g_i$ for $i=1,2,\ldots,v$.  By assumption, the distributions $\p$ and $\q$ are not $(99\blksz/100)$-partial cyclic shifts of each other.  Thus, for every $i'$ and $i''$ where $i' < i''$, if $t_{i''} - t_{i'} < \blksz$, then $i'' - i' \leq 99\blksz/100$, or else the good elements of $B_j$ and their preimages under $f$ would be a witness to $\psmall$ and $\qsmall$ being $(99b/100)$-partial cyclic shifts of one another.  It follows that
\[
t_v - t_1 \geq \blksz + (999\blksz/1000 - 99\blksz/100) \geq (1 + 1/1000)\blksz.
\]

That is, every good block ``uses up'' a bit more than $\blksz$ elements of the domain of $f$. Since $f\colon [\blksz \numblk] \to [\blksz \numblk]$ must be a nondecreasing function, we can have at most $\dfrac{\blksz \numblk}{(1+1/1000) \blksz} \leq (1-1/2000)\numblk$ good blocks.  Thus, at least $\numblk/2000$ of the blocks are not good.  It follows that
\[
\dist{\pbig}{\qbig} = \min_f \sum_{j=1}^{\numblk} d(B_j) \geq (\numblk/2000)(1/2500\numblk) = 1/(5 \cdot 10^6),
\]
as claimed.
\end{proof}

We are now in position to conclude the step~\ref{item:proof:overview:3} of our overall proof, establishing the following lower bound:
\begin{theorem}
  \label{theo:lower:bound:kvk}
	There exists an absolute constant $\dst_0>0$ such that, for every sufficiently large integers $\numparts, m \geq 1$, there exists $\q$ over $[\numparts$ for which testing $\binidentityproblem(\numparts,\q,\numparts,\dst_0)$ from ordered fingerprints only requires $\bigOmega{ (\numparts/m^2 2^m)^{1-1/m} }$ samples. (Moreover, one can take $\dst_0 = 10^{-7}$.)
\end{theorem}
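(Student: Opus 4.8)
The plan is to exhibit the pair $(\pbig,\qbig)$ constructed above as a two‑point lower‑bound instance against ordered‑fingerprint algorithms. First I would dispose of the ``easy'' endpoints: $\qbig\in\cP_{\qbig}$ (witnessed by the trivial partition $I_j=\{j\}$), so $\qbig$ is a \yes\ instance of $\binidentityproblem(\numparts,\qbig,\numparts,\dst_0)$, while \cref{claim:dist:eps0} gives $\totalvardist{\pbig}{\cP_{\qbig}}\geq\dst_0=10^{-7}$, so $\pbig$ is a \no\ instance. It then suffices to prove that, whenever $\ns$ is below the claimed bound, the distribution of the ordered fingerprint of $\ns$ i.i.d.\ samples is within $1/3$ in total variation under $\pbig$ versus $\qbig$; the standard two‑point argument ($|\probaOf{A=\accept}|$ can change by at most $\dtv$) then shows that no ordered‑fingerprint tester using that many samples can be $2/3$‑correct on both instances, forcing sample complexity larger than that threshold.

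The heart of the argument is a decomposition of the ordered fingerprint along the blocks $B_1,\dots,B_{\numblk}$ (recall $\numblk=\numparts/\blksz$). Drawing $\ns$ samples from $\pbig$ is the same as: (a)~drawing a block‑occupancy vector $(n_1,\dots,n_{\numblk})\sim\mathrm{Multinomial}(\ns;1/\numblk,\dots,1/\numblk)$; then (b)~independently for each $i$, drawing the ordered fingerprint $\bF^{(i)}$ of $n_i$ i.i.d.\ samples from the copy of $\psmall$ sitting on $B_i$. Since every element of $B_i$ precedes every element of $B_{i+1}$ in $[\numparts]$, the ordered fingerprint of the full sample is exactly the concatenation $\bF^{(1)}\bF^{(2)}\cdots\bF^{(\numblk)}$ in block order --- a deterministic function of $\big((n_i)_i,(\bF^{(i)})_i\big)$ that does \emph{not} depend on whether we started from $\pbig$ or $\qbig$, and the identical description holds for $\qbig$ with $\psmall$ replaced by $\qsmall$. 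By \cref{lem:two-dist:same-moments}(ii), $\psmall$ and $\qsmall$ agree on all $m$‑way moments, and since $(t-1)$‑way moments are linear combinations of $t$‑way moments this means $\psmall$ and $\qsmall$ induce the \emph{same} distribution on the ordered fingerprint of any $n\leq m$ samples. Consequently, on the event $E\eqdef\{\max_i n_i\leq m\}$ --- whose probability is the same under $\pbig$ and $\qbig$, depending only on $(n_i)$ --- the conditional law of the full ordered fingerprint is identical under $\pbig$ and $\qbig$ (condition on $(n_i)$, match per‑block fingerprints in law block by block, apply the deterministic concatenation map). Writing $\mathrm{OF}^{\ns}_{\pbig}$ for the random ordered fingerprint of $\ns$ i.i.d.\ samples from $\pbig$, this yields $\dtv\big(\mathrm{OF}^{\ns}_{\pbig},\mathrm{OF}^{\ns}_{\qbig}\big)\leq\probaOf{E^c}=\probaOf{\text{some block receives at least }m+1\text{ of the }\ns\text{ samples}}$.

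It remains to bound this balls‑into‑bins collision probability for $\ns$ balls and $\numblk$ bins. A union bound over bins and over $(m{+}1)$‑subsets of samples gives $\probaOf{E^c}\leq\binom{\ns}{m+1}\numblk^{-m}\leq\ns^{m+1}\numblk^{-m}$, so choosing $\ns=\flr{\tfrac12\,\numblk^{1-1/(m+1)}}$ makes $\probaOf{E^c}\leq 2^{-(m+1)}<1/3$ (this is exactly the generalized‑birthday bound, cf.~\cite{SuzukiTKT06}, and the bound is monotone in the number of samples, so it applies to any tester using at most $\ns$ samples). Hence any ordered‑fingerprint tester for $\binidentityproblem(\numparts,\qbig,\numparts,\dst_0)$ needs $\bigOmega{\numblk^{1-1/(m+1)}}$ samples. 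Finally, since $1-\tfrac1{m+1}\geq 1-\tfrac1m$ and $\numblk=\numparts/(5m^22^m)=\bigTheta{\numparts/(m^22^m)}$, this lower bound implies the stated $\bigOmega{(\numparts/m^22^m)^{1-1/m}}$. (If $\blksz\nmid\numparts$ one takes $\numblk=\flr{\numparts/\blksz}$ and gives the unused top $<\blksz$ elements zero mass under both $\pbig$ and $\qbig$; nothing above changes.)

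I expect the main obstacle to be the middle paragraph: one must argue carefully that the ordered fingerprint genuinely splits as a block‑ordered concatenation (so no ``cross‑block'' information leaks through the fingerprint), and that equality of $m$‑way moments transfers to equality of the \emph{conditional} law of the fingerprint given the occupancy vector $(n_i)$. Once that is in place, the remaining collision estimate and parameter substitution are routine.
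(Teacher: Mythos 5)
Your proposal is correct and mirrors the paper's proof step for step: same two-point instance $(\pbig,\qbig)$, same block-wise decomposition of the ordered fingerprint, same use of the $m$-way moment match from \cref{lem:two-dist:same-moments} per block, and the same generalized-birthday collision estimate over $\numblk$ blocks. If anything you are slightly more careful than the paper on the key indistinguishability step: where the paper invokes the terse \cref{clm:equal-parts}, you explicitly condition on the multinomial block-occupancy vector (whose law is identical under $\pbig$ and $\qbig$) and match per-block fingerprint laws before concatenating, which makes the bound $\dtv(\text{ordered fingerprints under }\pbig,\text{ordered fingerprints under }\qbig)\le\probaOf{\text{some block receives at least }m+1\text{ samples}}$ fully rigorous.
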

\begin{proof}
Let $\pbig$ and $\qbig$ be the distributions described above, with $\numparts = \blksz \numblk$ and $\dst_0 \eqdef 10^{-7}$ as in~\cref{claim:dist:eps0}.  We consider the problem $\binidentityproblem(\numparts,\qbig,\numparts,\dst_0)$.  We will show that any algorithm that takes too few samples, and only considers their ordered fingerprint, can distinguish between $\pbig$ and $\qbig$ with probability at most $1/3$. Since $\totalvardist{\pbig}{\qbig} > \dst_0 \eqdef 10^{-7}$ (and, of course, $\totalvardist{\qbig}{\qbig} = 0$), this will establish our lower bound. 

By~\cref{lem:two-dist:same-moments}, an ordered fingerprint from $m$ samples from one of the $\numblk$ blocks of size $\blksz$ gives no information in distinguishing $\pbig$ and $\qbig$. Further, it is immediate to see, from the definition of ordered fingerprints, that the following holds as well:
\begin{fact}
\label{clm:equal-parts}
Let $\p_1$ and $\p_2$ be two distributions over $[\ab]$, such that the distribution of fingerprints for $\p_b$ is given by $\{\bF^b_j\}_{j \in [\ab]}$.  Suppose there exists a partition of $[\ab]$ into $(I_1,I_2,\ldots,I_m)$ such that the sequences $\{\bF^1_j\}_{j \in I_i}$ and $\{\bF^2_j\}_{j \in I_i}$ are identically distributed for all $i$ in $[m]$.  Then $\p_1$ and $\p_2$ cannot be distinguished by ordered fingerprints.
\end{fact}
This implies that if an algorithm sees at most $m$ samples from every block, then it cannot distinguish between $\pbig$ and $\qbig$ with ordered fingerprints from this sample.  It follows that any algorithm that distinguishes between $\pbig$ and $\qbig$ using ordered fingerprints must receive at least $m+1$ samples from at least one of the $\numblk$ many blocks.

To conclude, note that the distribution over the blocks is uniform; thus, by folklore collision estimates (see also Suzuki et al.~\cite{SuzukiTKT06}), the probability that one of the $\numblk$ blocks contains $m+1$ samples is at most $1/3$, given $O(\numblk^{1-1/(m+1)})$ samples.  Since $\numblk = \numparts/\blksz = \numparts/(5m^2 2^m)$, we get the claimed lower bound.    
\end{proof}
\noindent Setting $m = \sqrt{\log \numparts}$ in~\cref{theo:lower:bound:kvk}, we get:
\begin{corollary}
  \label{cor:lower:bound:kvk}
	There exists an absolute constant $\dst_0>0$ such that, for every sufficiently large integers $\numparts \geq 1$, there exists $\q$ over $[\numparts$ for which testing $\binidentityproblem(\numparts,\q,\numparts,\dst_0)$ from ordered fingerprints only requires $\frac{\numparts}{2^{O(\sqrt{\log\numparts})}}$ samples.
\end{corollary}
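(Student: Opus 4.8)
The plan is to invoke \cref{theo:lower:bound:kvk} with the parameter $m$ set to an integer close to $\sqrt{\log\numparts}$ and then simplify the resulting lower bound $\bigOmega{(\numparts/(m^2 2^m))^{1-1/m}}$ by routine asymptotics. Concretely, I would take $m = \lceil \sqrt{\log\numparts}\,\rceil$; since this tends to infinity with $\numparts$, it meets the ``sufficiently large integers $\numparts, m$'' hypothesis of the theorem (and in particular $m > 4$, as required by \cref{lem:two-dist:same-moments}) once $\numparts$ exceeds an absolute constant. The reference distribution $\q$ over $[\numparts]$ and the constant $\dst_0 = 10^{-7}$ are then those furnished by \cref{theo:lower:bound:kvk}, so it only remains to check that the bound it gives has the claimed form $\numparts/2^{O(\sqrt{\log\numparts})}$.

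For the calculation, first note that with $m = \Theta(\sqrt{\log\numparts})$ we have $m^2 2^m = 2^{m + 2\log m} = 2^{\sqrt{\log\numparts} + O(\log\log\numparts)} = 2^{O(\sqrt{\log\numparts})}$, using $\log\log\numparts = o(\sqrt{\log\numparts})$; hence $\numparts/(m^2 2^m) = \numparts/2^{O(\sqrt{\log\numparts})}$. I would then lower bound the exponentiated quantity by
\[
\Big(\frac{\numparts}{m^2 2^m}\Big)^{1-1/m} \;=\; \numparts^{\,1-1/m}\cdot (m^2 2^m)^{-(1-1/m)} \;\geq\; \frac{\numparts^{\,1-1/m}}{m^2 2^m},
\]
and observe $\numparts^{1-1/m} = \numparts\cdot\numparts^{-1/m} = \numparts\cdot 2^{-(\log\numparts)/m} = \numparts\cdot 2^{-\Theta(\sqrt{\log\numparts})}$. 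Multiplying the two contributions, the sample lower bound becomes $\numparts\cdot 2^{-\Theta(\sqrt{\log\numparts})}\cdot 2^{-O(\sqrt{\log\numparts})} = \numparts/2^{O(\sqrt{\log\numparts})}$, exactly the statement of \cref{cor:lower:bound:kvk}.

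There is essentially no obstacle here, as the corollary is an immediate consequence of \cref{theo:lower:bound:kvk}; the only points needing a little care are (i) rounding $\sqrt{\log\numparts}$ up to an integer $m$ (and, if one wants $\numparts$ to be an exact multiple of $\blksz = 5m^2 2^m$, first rounding $\numparts$ down to $\blksz\flr{\numparts/\blksz}$, which changes it by a factor $1 + o(1)$ and is absorbed into the bound) without changing the orders of $2^m$ or $\numparts^{1/m}$, and (ii) absorbing the lower-order terms ($\log m$ in the exponent, the $1/m$ correction to the outer exponent) into the $O(\sqrt{\log\numparts})$. For a cleaner informal presentation one may just write $m = \sqrt{\log\numparts}$, note $m^2 2^m = 2^{O(\sqrt{\log\numparts})}$ and $\numparts^{1/m} = 2^{\sqrt{\log\numparts}}$, and conclude $(\numparts/(m^2 2^m))^{1-1/m} = \numparts/2^{O(\sqrt{\log\numparts})}$.
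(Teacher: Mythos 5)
Your proposal is correct and matches the paper's own argument, which simply sets $m = \sqrt{\log\numparts}$ in \cref{theo:lower:bound:kvk} and leaves the asymptotic simplification implicit. Your calculation spelling out that both $m^2 2^m$ and $\numparts^{1/m}$ are $2^{O(\sqrt{\log\numparts})}$ is exactly the routine verification the paper omits, and your remarks about rounding $m$ to an integer and $\numparts$ to a multiple of $\blksz$ are harmless bookkeeping that the paper also glosses over.
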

 
\bibliographystyle{alpha}
\bibliography{references}

\end{document}